\newcommand{\comments}[1]{}
\let\counterwithin\relax  
\definecolor{dark-gray}{gray}{0.3}
\definecolor{dkgray}{rgb}{.4,.4,.4}
\definecolor{dkblue}{rgb}{0,0,.5}
\definecolor{medblue}{rgb}{0,0,.75}
\definecolor{rust}{rgb}{0.5,0.1,0.1}
\newtheoremstyle{myThm} 
    {\topsep}                    
    {\topsep}                    
    {\itshape}                   
    {}                           
    {\sffamily\bfseries}                   
    {.}                          
    {.5em}                       
    {}  
\newtheoremstyle{myRem} 
    {\topsep}                    
    {\topsep}                    
    {}                   
    {}                           
    {\sffamily}                   
    {.}                          
    {.5em}                       
    {}  
\newtheoremstyle{myDef} 
    {\topsep}                    
    {\topsep}                    
    {}                   
    {}                           
    {\sffamily\bfseries}                   
    {.}                          
    {.5em}                       
    {}  
\theoremstyle{myThm}
\newtheorem{theorem}{Theorem}[section]
\newtheorem{proposition}[theorem]{Proposition}
\newtheorem{corollary}[theorem]{Corollary}
\theoremstyle{myRem}
\newtheorem{remark}[theorem]{Remark}
\theoremstyle{myDef}
\let\originalleft\left
\let\originalright\right
\renewcommand{\left}{\mathopen{}\mathclose\bgroup\originalleft}
\renewcommand{\right}{\aftergroup\egroup\originalright}
\renewcommand{\phi}{\varphi}
\providecommand{\mathbbm}{\mathbb} 
\newcommand{\R}{\mathbbm{R}}
\definecolor{mygreen}{rgb}{0.1,0.75,0.2}
\newcommand{\nc}{\normalcolor}
\newcommand{\Expect}{\operatorname{\mathbb{E}}}
\newcommand{\dkl}{d_{\mbox {\tiny{\rm KL}}}}
\newcommand{\Nc}{\mathcal{N}}
\newcommand{\qlp}{q_{\mbox {\tiny{\rm LP}}}}
\newcommand{\zlp}{z_{\mbox {\tiny{\rm LP}}}}
\newcommand{\Clp}{C_{\mbox {\tiny{\rm LP}}}}
\newcommand{\bt}{\theta}
\newcommand{\bti}{\theta_i}
\newcommand{\ELBO}{\textsc{elbo}}
\newcommand{\g}{\,\vert\,}
\newcommand{\EEq}[1]{\mathbb{E}_{q}\left[#1\right]}
\newcommand{\EEt}[1]{\mathbb{E}_{q(\theta)}\left[#1\right]}
\newcommand{\EEu}[1]{\mathbb{E}_{q(u)}\left[#1\right]}
\newcommand{\J}{{\mathsf{J}}}
\title{A Variational Inference Approach to \\ Inverse Problems with Gamma Hyperpriors} 
\author{Shiv Agrawal,  Hwanwoo Kim, Daniel Sanz-Alonso, and Alexander Strang}
\date{University of Chicago}
\makeatletter\@addtoreset{section}{part}\makeatother%
\numberwithin{equation}{section}
\newcommand{\upperRomannumeral}[1]{\uppercase\expandafter{\romannumeral#1}}
\begin{document}
\maketitle 


\abstract{Hierarchical models with gamma hyperpriors provide a flexible, sparse-promoting framework to bridge $L^1$ and $L^2$ regularizations in Bayesian formulations to inverse problems. Despite the Bayesian motivation for these models, existing methodologies are limited to \textit{maximum a posteriori} estimation. The potential to perform uncertainty quantification has not yet been realized. 
This paper introduces a variational iterative alternating scheme for hierarchical inverse problems with gamma hyperpriors. The proposed variational inference approach yields accurate reconstruction, provides meaningful uncertainty quantification, and is easy to implement. In addition, it lends itself naturally to conduct model selection for the choice of hyperparameters. We illustrate the performance of our methodology in several computed examples, including a deconvolution problem and sparse identification of dynamical systems from time series data.  

\section{Introduction}\label{sec:introduction}
This paper introduces a variational inference approach that enables uncertainty quantification for hierarchical Bayesian inverse problems with gamma hyperpriors. The hierarchical model that we consider, along with an Iterative Alternating Scheme (IAS) to compute the \textit{maximum a posteriori} (MAP) estimate, were introduced and analyzed in   \cite{calvetti2020sparse,calvetti2019hierachical,calvetti2020sparsity,calvetti2019brain,calvetti2015hierarchical}. These papers provide strong evidence of the flexibility of the hierarchical model and show that the IAS algorithm is easy to implement and globally convergent. However, despite the Bayesian motivation for the hierarchical model, previous work has only considered MAP estimation, and the potential to perform uncertainty quantification has not yet been realized. Using the general framework of variational inference, we introduce a Variational Iterative Alternating Scheme (VIAS)  that  shares the flexibility and ease of implementation of IAS, while enabling uncertainty quantification and model selection. 

The hierarchical Bayesian model that we consider gives a posterior density $p(u,\theta \g y)$ for the unknown quantity of interest $u \in \R^d$ and parameters $\theta \in \R^d$ given observed data $y \in \R^n.$ The goal of IAS is to find the MAP estimator, that is, the pair $(u^*, \theta^*)$ that maximizes the posterior density. This leads to an optimization problem which IAS solves by producing iterates $(u^k,\theta^k),$ $k \ge 1,$  satisfying
\begin{align}\label{eq:IASintro}
\begin{split}
u^{k+1} &= \arg \max_u p(u, \theta^k \g y), \\
\theta^{k+1} &= \arg \max_\theta p(u^{k+1}, \theta \g y).
\end{split}
\end{align}
In contrast, the goal of our proposed VIAS method is to find the density $q^*(u,\theta)$ that is closest to the posterior $p(u,\theta \g y)$ in Kullback-Leibler divergence, within the mean-field family of distributions of the form $q(u,\theta) = q(u)\, q(\theta)$. This leads to an optimization problem over densities which VIAS solves by producing iterates $q^k(u,\theta)=q^k(u) \, q^k(\theta),$ $k \ge 1,$  satisfying
\begin{align}\label{eq:VIASintro}
\begin{split}
q^{k+1}(\theta) &= \arg \min_{q(\theta)} \dkl \bigl(q^k(u) \, q(\theta) \, \| \, p(u,\theta \g y)\bigr), \\
q^{k+1}(u) &= \arg \min_{q(u)} \dkl \bigl(q(u) \, q^{k+1}(\theta) \, \| \, p(u,\theta \g y)\bigr).
\end{split}
\end{align}
Approximate Bayesian inference can then be performed using the variational distribution $q^*(u,\theta)$, which will be shown to be tractable, rather than the posterior $p(u,\theta \g y).$
Due to the tractability of $q^*(u,\theta),$ point estimates and credible intervals can be efficiently computed with the variational distribution, while doing so with the true posterior would be computationally challenging.

Central to the implementation of IAS is the fact that the maximizers $u^{k+1}$ and $\theta^{k+1}$ in \eqref{eq:IASintro} can be obtained in closed form, by exploiting the structure of the hierarchical model with gamma hyperpriors. A similar property is satisfied by VIAS. Indeed, our choice of mean-field admissible densities ensures that the minimizers $q^{k+1}(u)$ and $q^{k+1}(\theta)$ in \eqref{eq:VIASintro} are, respectively, Gaussian and generalized inverse Gaussian densities. We will derive closed formulas for the iterative updating of the parameters of these distributions.

Despite their shared structure, there are some fundamental differences between IAS and VIAS. While IAS only gives a point estimate i.e the MAP, VIAS gives a variational distribution that approximates the posterior. This variational distribution can be used to understand the covariance structure and find credible intervals for the estimates. However, it is worth emphasizing that VIAS only provides an approximation to the posterior, and therefore point estimates or credible intervals constructed with VIAS will only give approximate posterior inference. In contrast, IAS converges to the true MAP estimate. The primary advantage of VIAS is its potential to provide meaningful uncertainty quantification. We will also show that the variational perspective lends itself naturally to model selection for the choice of hyperparameters. 
An advantage of IAS is that it converges globally to the MAP estimate due to the convexity of the log-posterior density, while VIAS is, in general, only guaranteed to converge to a local maximizer of the optimization problem \eqref{eq:VIASintro}. We will demonstrate the potential emergence of spurious local  maxima in the VIAS objective function for extreme data realizations and hyperparameter values, and describe how convergence to the global  maximizer  can be achieved in practice by suitable initialization of the variational algorithm.

\subsection{Related Work}
This paper, among others, introduces variational inference techniques \cite{bishop,jordan1999introduction,wainwright2008graphical,blei2017variational}  to Bayesian inverse problems \cite{tarantola2005inverse,kaipio2006statistical,calvetti2007introduction,AS10,sanzstuarttaeb}, where computational approaches are often based on MAP estimation \cite{kaipio2006statistical}, Monte Carlo and measure transport sampling  \cite{liu2008monte,agapiou2017importance,marzouk2016sampling}, or iterative Kalman methods \cite{chada2020iterative}. Some recent works that have investigated the use of variational inference for inverse problems include \cite{maestrini2021variational,tonolini2020variational}. 
Variational inference has a comparable computational cost to MAP estimation, but has two main advantages: (i)  it can provide uncertainty quantification; and (ii) it lends itself naturally to conduct model selection. In addition, the variational distribution can be used as a proposal mechanism for Monte Carlo sampling algorithms. 
 A simple but popular alternate way to probe the posterior is to find its Laplace approximation, namely the Gaussian centered at the MAP whose covariance is given by the inverse Hessian of the negative log-posterior density. The Kullback-Leibler accuracy of Laplace approximations was investigated in \cite{dehaene2017computing}, and the Hellinger accuracy in inverse problems with small noise was established in \cite{schillings2020convergence}. However, Laplace approximations can be ineffective in large-noise or small-data regimes, where the posterior may not be well approximated by a Gaussian measure. In addition, computing the inverse Hessian can be prohibitively expensive in high dimensional nonlinear inverse problems.  Monte Carlo methods can provide accurate posterior inference while variational inference is based on an approximation to the posterior; however, Monte Carlo methods often require a large number of samples and hence a large number of forward model evaluations, which can be costly. In addition, tuning Monte Carlo methods and assessing their convergence can be challenging. For hierarchical Bayesian models, the Gibbs sampler alleviates the need of tuning \cite{damlen1999gibbs}, but  the chain may still converge slowly for highly anisotropic target densities \cite{agapiou2014analysis,roberts1997updating}. 

As mentioned above, we will consider a hierarchical Bayesian model with gamma hyperpriors introduced and analyzed in \cite{calvetti2020sparse,calvetti2019hierachical,calvetti2020sparsity,calvetti2019brain,calvetti2015hierarchical}. The paper \cite{calvetti2020sparse} investigates generalized gamma hyperpriors and \cite{calvetti2020sparsity} discusses hybrid solvers for MAP estimation that can improve on IAS. The hierarchical model and IAS algorithm have been shown to be successful in realistic inverse problems including brain activity mapping from MEG \cite{calvetti2015hierarchical,calvetti2019brain}. These papers also emphasize the flexibility of the model, and its ability to provide useful regularization for sparse signals \cite{calvetti2019hierachical,calvetti2020sparse,calvetti2020sparsity}. As described in \cite{calvetti2019hierachical},  the IAS algorithm is closely related to iterative reweighted least squares  \cite{green1984iteratively} and related work \cite{gorodnitsky1997sparse,daubechies2010iteratively} on signal processing with emphasis on sparsity. Sparse-promoting algorithms and models are key in statistics applications \cite{tibshirani1996regression,carvalho2009handling}. Our hierarchical approach is closely related to empirical Bayes statistical methods \cite{robbins1992empirical} and to bilevel and data-driven methods for inverse problems \cite{bard2013practical,arridge2019solving}.

\subsection{Outline and Main Contributions}
\begin{itemize}
\item Section \ref{sec: hierarchical} formalizes the problem of interest and reviews the hierarchical model with gamma hyperpriors and the IAS algorithm. Building on previous work on IAS \cite{calvetti2019hierachical}, we derive and show the convergence of an iterative Laplace approximation to the posterior, used in Section \ref{sec: experiments} for numerical comparisons with our proposed VIAS. 
\item Section \ref{sec:VI} introduces the novel VIAS and discusses its convergence. We will give all necessary background on variational inference. 
\item Section \ref{sec: experiments} demonstrates the accuracy of VIAS and its ability to provide meaningful uncertainty quantification in four computed examples. These examples include a deconvolution problem from \cite{calvetti2020sparse} and a new application of IAS and VIAS for data-driven sparse identification of dynamical systems \cite{brunton2016discovering} from time series data. We also introduce and show the effectiveness of a model selection approach for the choice of hyperparameters. 
\item We close in Section \ref{sec:Conclusions} with some research directions that stem from this work. 
\end{itemize}

\paragraph{Notation} For matrix $P,$ we write $P\succ 0$ if $P$ is positive definite. For $P \succ 0,$ we denote by $\| \cdot \|^2_P := | P^{-1/2} \cdot |^2$ the squared Mahalanobis norm induced by the matrix $P,$ where $| \cdot |$ denotes the Euclidean norm. 

\section{MAP Estimation and Laplace Approximation}\label{sec: hierarchical}
This section is organized as follows. In Subsection \ref{ssec:21}, we formalize the inverse problem of interest and the hierarchical model with gamma hyperpriors. Subsection \ref{ssec:22} overviews the IAS algorithm for MAP estimation and introduces an iterative Laplace approximation method.  Subsection \ref{ssec:23} reviews a convergence result for the IAS algorithm, from  which we deduce convergence of the iterative Laplace approximation method.

\subsection{Hierarchical Bayesian  Model}\label{ssec:21}
We consider the following linear discrete inverse problem of recovering an unknown $u$ from data $y$ related by
\begin{equation}\label{eq:IP}
 y = A u + \eta,
\end{equation} 
where $A \in \R^{n \times d}$ is a given, possibly ill-conditioned, matrix and typically $d \ge n.$ We assume that the noise term $\eta$ is Gaussian distributed $\eta \sim \Nc(0, \Gamma)$ with given $\Gamma \succ 0.$ Following \cite{calvetti2020sparse,calvetti2019hierachical,calvetti2020sparsity,calvetti2019brain,calvetti2015hierarchical}, we adopt the following hierarchical Bayesian model, where the prior on $u$ is conditionally Gaussian given a prior variance vector $\theta \in \R^d$:
\begin{align}\label{eq:model}
\begin{split}
y \g u &\sim \Nc(Au, \Gamma ), \\
u\g\theta &\sim  \Nc(0, D_\theta), \quad \quad \quad D_\theta = \text{diag}(\theta), \\
 \theta_i &\sim \text{Gamma}(\alpha_i, \beta), \,\, \,\, \, 1 \le i \le d.
 \end{split}
\end{align}
Here $\alpha_i$ and $\beta$ denote the shape and rate parameters, respectively. Our aim is to estimate  $z := (u, \theta)$ given the observed data $y.$ In the Bayesian approach to inverse problems \cite{kaipio2006statistical,AS10,sanzstuarttaeb}, inference is based on the posterior distribution which, for the hierarchical model \eqref{eq:model}, takes the form
\begin{align}\label{eq:posterior}
\begin{split}
p   ( u \, , \theta  \g y ) &= \frac{ p(y \g u,\theta)    p(u \g \theta)  p(\theta)}{p(y)}  \propto \exp \bigl( -\J(u,\theta) \bigr),
\end{split}
\end{align}
where 
\begin{equation}\label{eq:energy}
\ooalign{
$\J(u,\theta ) := \overbrace{ \frac{1}{2} \| y - Au \|^2_{ \Gamma } +  \frac{1}{2}\| u \|^2_{D_\theta}  }^{(a)} + \sum_{i = 1}^{d} \biggl[ \frac{\theta_i}{\alpha_i} - \Bigl( \beta - \frac32 \Bigr) \log \frac{\theta_i}{\alpha_i}  \biggr] .$ \cr 
$\phantom{\J(u,\theta ) = \frac{1}{2} \| y - Au \|^2  + {} } {\underbrace{ \phantom{  \frac{1}{2}\| u \|^2_{D_\theta} \sum_{j = 1}^d \biggl[ \frac{\theta_i}{\alpha_i} - \Bigl( \beta - \frac32 \Bigr) \log \frac{\theta_i}{\alpha_i}  \biggr] \,\,\,\,\,\,\,\,\,\,} }_{(b)} }$ \cr 
}
\end{equation}
Here $(a)$ and $(b)$ identify the two objectives that will be minimized iteratively by IAS. 
The $\alpha_i$'s act as scale parameters that control the expected size of $\theta_i$, and, as a result, $u_i^2$ \cite{calvetti2020sparse}. They can be chosen automatically using the signal to noise ratio and expected cardinality of the support \cite{calvetti2019hierachical}. Previous work \cite{calvetti2020sparse} has analyzed a whitening of the problem, setting all $\alpha_i = 1$. In contrast, in our variational algorithm we will not perform such whitening, and the $\alpha_i$'s will determine the degree of shrinkage towards zero off the support of the unknown. The hyperparameter $\beta$ controls how sharply $\J(u,\theta)$ penalizes non-sparse inputs for small but non-zero values off the support. In the limit as $\beta$ converges to $3/2,$ the MAP estimator given by the minimizer of $\J(u,\theta)$ converges to the solution to an $L^1$ penalized problem \cite{calvetti2019hierachical}. 
We refer to \cite{calvetti2015hierarchical} for further background and motivation on the use of hierarchical gamma hyperpriors in Bayesian inverse problems which, contrary to common practice in statistics, do not lead to a conjugate model.  

\subsection{The Iterative Alternating Scheme and Laplace Approximation}\label{ssec:22}
The MAP estimate of $z=(u,\theta)$ is, by definition, the maximizer of the posterior $p(z\g y)$ or, equivalently, the minimizer of $\J(z).$ The papers \cite{calvetti2020sparse,calvetti2019hierachical,calvetti2020sparsity,calvetti2019brain,calvetti2015hierarchical} proposed, analyzed, and implemented an Iterative Alternating Scheme (IAS) for MAP estimation in a variety of inverse problems. The IAS consists of two separate minimization steps:
\begin{enumerate}
\item {\bf Initialize} $\theta^0,$ $k = 0$. 
\item {\bf Iterate  until convergence:}
\begin{enumerate}[(i)]
\item  Update $u^{k+1} = \arg \min_u  \J(u, \theta^k).$
\item Update $\theta^{k+1} = \arg \min_\theta \J(u^{k+1}, \theta).$ 
\item  $k \rightarrow k+1.$
\end{enumerate}
\end{enumerate}
Let $z^k := (u^k,\theta^k).$ Clearly $\J(z^k)$ is monotonically decreasing in $k.$ Under suitable assumptions on the hyperparameters, to be made precise in Proposition \ref{prop:IASconvergence} below, $\J$ is convex and IAS converges to the global minimizer of $\J.$ In other words, $z^k$ convereges to the MAP estimator. In addition to this convergence guarantee, the IAS algorithm is simple to implement because of the structure of the energy functional $\J.$ Indeed,  in step (i) only the $u$-dependent part $(a)$ in  \eqref{eq:energy} needs to be considered, and in step (ii) only the $\theta$-dependent part $(b)$ is needed. This results in straightforward implementation of both steps, as we describe next.

\subsubsection{Updating $u$}
The update of the $u$ component boils down to solving a standard linear least-squares problem, which admits a closed form solution
\begin{align}
\arg \min_u  \J(u, \theta) &= \arg \min_u  \frac{1}{2} \| y - Au \|^2_{ \Gamma } +  \frac{1}{2}\| u \|^2_{D_\theta} \label{eq:constrainedLinear} \\
& = (A^\top \Gamma^{-1} A + D_\theta^{-1} )^{-1} A^\top \Gamma^{-1} y. \nonumber
\end{align}
In practice, when the dimension $d$ of the unknown is much larger than the dimension $n$ of the data, this linear least-squares problem can be effectively solved using conjugate gradient together with an early stopping based on Morozov's discrepancy principle. This approach has been applied in \cite{calvetti2020sparse,calvetti2019hierachical,calvetti2020sparsity,calvetti2019brain,calvetti2015hierarchical}, and further analyzed in \cite{calvetti2018bayes}. In such underdetermined problems, inverting in $d$-dimensional space can also be avoided using the Sherman-Morrison-Woodbury lemma, which gives the following equivalent Kalman-type update 
\begin{equation}\label{eq:KalmanIAS}
\arg \min_u  \J(u, \theta) = Gy,  \quad  \quad G := D_\theta A^\top(AD_\theta A^\top + \Gamma)^{-1},
\end{equation}
where the matrix $G$ is called the Kalman gain. 

Note that \eqref{eq:constrainedLinear} can be rewritten as
$$
\arg \min_u  \J(u, \theta) = \arg \min_u  \frac{1}{2} \| y - Au \|^2_{ \Gamma } +  \frac{1}{2} \sum_{i=1}^d \frac{u_i^2}{\theta_i},
$$
which shows that $D_\theta$ controls the sparsity of the solution $u$, with smaller $\theta_i$ leading to more shrinkage of $u_i$ towards zero. Therefore, in the hierarchical Bayesian model setup, the variance parameter $\theta$ not only determines the variation of the parameter $u$ but also the level of sparsity of $u$.

\subsubsection{Updating $\theta$}
As shown in \cite{calvetti2019hierachical}, the update of the $\theta$-component part (b) in \eqref{eq:energy} can be obtained by direct computation of a critical point of $\J(z)$ as follows
\begin{align*}
\arg \min_\theta \J(u, \theta) =
 \alpha_i \biggl( \frac{\tilde{\beta} }{2} + \sqrt{ \frac{\tilde{\beta}^2}{4} + \frac{u_i ^2}{2\alpha_i} } \,\, \biggr), \quad \tilde{\beta} = \beta - 3/2.
\end{align*}
A pseudo-code for the IAS algorithm is given in Algorithm \ref{alg_1}. 

\begin{figure}
\makebox[\linewidth]{%
  \begin{minipage}{\dimexpr 0.75\linewidth+5em}
\begin{algorithm}[H]
\FloatBarrier
\caption{Iterative Alternating Scheme (IAS) \label{alg_1}}
\STATE {\bf Input}: Data $y$, matrix $A$. Prior hyperparameters: $ \alpha, \beta.$ \\
\STATE  Initialize $\theta^0,$ $k = 0$. \\
\STATE {\bf For} $k = 0, 1, \ldots$ until convergence {\bf do}:
\begin{enumerate}[(i)]
\item Set $D_\theta = \text{diag}( \theta^k)$ and update
  $$u^{k+1} = (A^\top \Gamma^{-1} A + D_\theta^{-1} )^{-1} A^\top \Gamma^{-1} y.$$ 
\item Update $$\theta_i^{k+1} =
 \alpha_i \biggl( \frac{\tilde{\beta} }{2} + \sqrt{ \frac{\tilde{\beta}^2}{4} + \frac{(u_i^{k+1} )^2}{2\alpha_i} } \,\, \biggr), \quad \tilde{\beta} = \beta - 3/2.$$
 \end{enumerate}
\STATE{\bf end for} \\
\STATE {\bf Output}: Approximation to the MAP estimator $(u^{k+1}, \theta^{k+1}) \approx \arg \max p(z\g y).$
\end{algorithm}
  \end{minipage}}
\end{figure}

\begin{remark}
A variety of stopping rules have been considered. For instance, the relative change in $u$ (or $u$ and $\theta)$ being below some threshold. As an alternative, the decrease in the two terms $(a)$ and $(b)$ in \eqref{eq:energy} can be monitored to direct stopping. We note again that the $u$ update in step (i) can be implemented using conjugate gradient together with a stopping criteria given by Morozov's discrepancy principle. 
\end{remark}

\subsubsection{IAS Laplace Approximation}
Here we show that the IAS iterates can be used to obtain a Laplace approximation to the posterior. Recall that the Laplace approximation $\qlp(z)  = \Nc(\zlp, \Clp)$ to the posterior $p(z \g y)$ is the Gaussian distribution whose mean $\zlp$ is the MAP estimator and whose precision $\Clp^{-1}$ is the Hessian of the objective $\J(z)$ evaluated at $\zlp, $ that is,
\begin{equation}\label{eq:LA}
\zlp = \arg \min_z \J(z) , \quad \quad \Clp^{-1} = \nabla \nabla \J(\zlp).
\end{equation}
Thus, the sequence $z^k = (u^k, \theta^k)$ can be used to approximate $\qlp(z)$ by the Gaussian $\qlp^k (z)= \Nc(\zlp^k, \Clp^k),$ 
where
\begin{equation}\label{eq:IASLA}
\zlp^k = (u^k, \theta^k), \quad  \quad \Clp^k = \Bigl( \nabla \nabla \J(\zlp^k) \Bigr)^{-1}.
\end{equation}
Partitioning the Hessian $H(z) = \nabla \nabla \J(z)$ into four blocks of size $d \times d$ gives
\begin{align*}
H(z) = 
\begin{bmatrix}
H_{uu}(z) & H_{u \theta}(z) \\
H_{\theta u}(z)  & H_{\theta \theta}(z)
\end{bmatrix},
\end{align*}
with
\begin{align*}
H_{uu} (z) &= A^\top  \Gamma^{-1} A + \text{diag} (1/\theta),\\
H_{u \theta}(z) &= - \text{diag} (u /\theta^2), \\ 
H_{\theta \theta}(z) &= \text{diag} (u^2/\theta^3 + \tilde{\beta} /\theta^2),
\end{align*}
where multiplication and division operations are defined in element-wise. This explicit characterization of the Hessian, together with Algorithm \ref{alg_1} and \eqref{eq:IASLA} yield an iterative Laplace approximation method.

\subsection{Convergence of IAS and Laplace Approximation}\label{ssec:23}
The following result was proved in \cite{calvetti2015hierarchical}. 
\begin{proposition}\label{prop:IASconvergence}
For $\beta > 3/2$ and $\alpha \in \R_+^n,$ the energy functional \eqref{eq:energy} defined over $\R^d \times \R_+^d$ is strictly convex, thus has a unique global minimizer $z^*= ( u^*, \theta^*).$ The IAS algorithm produces a sequence $z^k = (u^k, \theta^k)$ that converges to the global minimizer. 
\end{proposition}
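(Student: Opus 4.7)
The plan is to establish strict convexity and coercivity of $\J$ on $\R^d \times \R_+^d$, and then invoke a standard block coordinate descent convergence argument.

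First I would decompose
$$\J(u,\theta) = \frac12 \|y - Au\|^2_{\Gamma} + \sum_{i=1}^d g_i(u_i,\theta_i), \qquad g_i(u_i,\theta_i) := \frac{u_i^2}{2\theta_i} + \frac{\theta_i}{\alpha_i} - \tilde{\beta}\log\frac{\theta_i}{\alpha_i},$$
where $\tilde{\beta} := \beta - 3/2 > 0$. The data-fit term is convex in $u$. For each $g_i$ on $\R \times \R_+$, a direct computation yields
$$\nabla^2 g_i(u_i,\theta_i) = \begin{bmatrix} 1/\theta_i & -u_i/\theta_i^2 \\ -u_i/\theta_i^2 & u_i^2/\theta_i^3 + \tilde{\beta}/\theta_i^2 \end{bmatrix},$$
whose $(1,1)$ entry is positive and whose determinant equals $\tilde{\beta}/\theta_i^3 > 0$. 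Hence each $g_i$ is strictly jointly convex on $\R \times \R_+$, and summing over $i$ and adding the convex data-fit term gives strict convexity of $\J$ on $\R^d \times \R_+^d$.

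Next I would verify coercivity in order to produce a unique global minimizer $z^* = (u^*,\theta^*)$. If any $\theta_i \to 0^+$, the barrier $-\tilde{\beta}\log(\theta_i/\alpha_i)$ drives $\J \to \infty$ (here $\tilde{\beta}>0$ is essential); if any $\theta_i \to \infty$, the linear term $\theta_i/\alpha_i$ dominates; and if $|u_i| \to \infty$, then either $\theta_i$ remains bounded, so $u_i^2/\theta_i \to \infty$, or $\theta_i$ also diverges, so the linear term forces divergence. Consequently, sublevel sets of $\J$ are compact in $\R^d \times \R_+^d$, which together with continuity and strict convexity yields the unique global minimizer $z^*$.

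Finally, for the IAS iterates, strict convexity makes each block update well defined, and the closed-form expressions derived in Subsection \ref{ssec:22} show that the updates $\theta \mapsto \arg\min_u \J(u,\theta)$ and $u \mapsto \arg\min_\theta \J(u,\theta)$ depend continuously on the other block. The sequence $\J(z^k)$ is monotonically non-increasing and bounded below, hence convergent, and $\{z^k\}$ lies in the compact sublevel set $\{\J \le \J(z^0)\}$. Passing to a convergent subsequence $z^{k_j} \to \bar z$, continuity of the block updates forces $\bar z$ to be a fixed point of the alternating scheme, hence a critical point of $\J$. By strict convexity $\bar z = z^*$; since this holds for every accumulation point, the full sequence $z^k$ converges to $z^*$. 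The main obstacle is this final step: block coordinate descent can fail to converge globally on general convex objectives, so one must carefully combine strict convexity with compactness of the sublevel sets in $\R^d \times \R_+^d$ to rule out escape to the boundary $\{\theta_i = 0\}$ and to upgrade subsequential convergence to convergence of the whole sequence.
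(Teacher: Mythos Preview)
The paper does not prove this proposition; it simply records that ``the following result was proved in \cite{calvetti2015hierarchical}.'' So there is no in-paper argument to compare against. That said, your proposal is sound and is essentially the standard route one would expect the cited reference to take: your Hessian computation for each $g_i$ is correct (determinant $\tilde\beta/\theta_i^3>0$), the sum $\sum_i g_i$ therefore has block-diagonal positive-definite Hessian on $\R^d\times\R_+^d$, and adding the convex data-fit term preserves strict convexity. Your coercivity check is right, and the key point you flag---that the log barrier with $\tilde\beta>0$ keeps the sublevel sets bounded away from $\{\theta_i=0\}$---is exactly what makes them compact in the open domain.

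The only place to tighten is the last paragraph. Your sketch ``continuity of the block updates forces $\bar z$ to be a fixed point'' hides one bookkeeping step: from $z^{k_j}\to\bar z$ you get $u^{k_j+1}=U(\theta^{k_j})\to U(\bar\theta)$, but you need to argue $U(\bar\theta)=\bar u$. This follows because $\J(z^k)$ converges, so $\J(U(\bar\theta),\bar\theta)=\lim \J(u^{k_j+1},\theta^{k_j})=\J(\bar u,\bar\theta)$, and strict convexity of $\J(\cdot,\bar\theta)$ then forces $\bar u=U(\bar\theta)$; similarly for $\bar\theta=\Theta(\bar u)$. Once you spell this out, the fixed-point-implies-critical-point-implies-$z^*$ chain is clean. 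Your caveat about block coordinate descent is well placed but slightly overstated: for a continuously differentiable, strictly convex objective with unique block minimizers and compact sublevel sets, two-block exact coordinate descent does converge globally; the pathologies you allude to arise only in nonsmooth or non-strictly-convex settings.
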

The convergence analysis of IAS was further developed in \cite{calvetti2019hierachical}, where rates of convergence were established. As a consequence of Proposition \ref{prop:IASconvergence} we have the following corollary:
\begin{corollary}\label{corollary}
For $\beta > 3/2$ the IAS Laplace approximation $\qlp^k(z) = \Nc (z^k, \Clp^k)$ given by \eqref{eq:IASLA} converges weakly to the Laplace approximation $\qlp(z) = \Nc (\zlp, \Clp)$ given by \eqref{eq:LA}.
\end{corollary}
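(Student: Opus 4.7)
The plan is to reduce the weak convergence of the Gaussian measures $\qlp^k$ to the convergence of their means and covariances, and then to invoke the standard fact that $\Nc(\mu_k,\Sigma_k)\Rightarrow\Nc(\mu,\Sigma)$ whenever $\mu_k\to\mu$ and $\Sigma_k\to\Sigma$. Convergence of the means is immediate: Proposition \ref{prop:IASconvergence} already gives $z^k=(u^k,\theta^k)\to\zlp=(\umap,\theta^*)$ under the assumption $\beta>3/2$. The substantive step is the convergence of covariances $\Clp^k=H(z^k)^{-1}\to H(\zlp)^{-1}=\Clp$, where $H(z):=\nabla\nabla\J(z)$.

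First I would observe that $\zlp$ lies in the open domain $\R^d\times\R_+^d$ on which $\J$ is defined and strictly convex (Proposition \ref{prop:IASconvergence}), so every component $\theta_i^*$ is strictly positive. Since $z^k\to\zlp$, the tail of the sequence sits inside a neighborhood $\U\subset\R^d\times\R_+^d$ on which each coordinate $\theta_i$ is bounded away from zero; (also, the explicit IAS update $\theta_i^{k+1}=\alpha_i(\tilde\beta/2+\sqrt{\tilde\beta^2/4+(u_i^{k+1})^2/(2\alpha_i)})$ is strictly positive for $\tilde\beta=\beta-3/2>0$, so the iterates stay in this domain from the start, and $\Clp^k$ is well defined.) Inspecting the block expressions for $H_{uu}$, $H_{u\theta}$, $H_{\theta\theta}$ displayed just after \eqref{eq:IASLA}, each entry is a rational function of $(u,\theta)$ with denominators involving only positive powers of $\theta_i$, so $H:\R^d\times\R_+^d\to\R^{2d\times 2d}$ is continuous on $\U$. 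Combined with $z^k\to\zlp$, this yields $H(z^k)\to H(\zlp)$ in the operator norm.

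Next, strict convexity of $\J$ on the convex open set $\R^d\times\R_+^d$ at the interior minimizer $\zlp$ forces $H(\zlp)\succ 0$, hence invertible. Matrix inversion being continuous on the open set of invertible matrices, one concludes
\[
\Clp^k \;=\; H(z^k)^{-1} \;\longrightarrow\; H(\zlp)^{-1} \;=\; \Clp,
\]
and in particular $\Clp^k\succ 0$ eventually. Finally, weak convergence $\qlp^k\Rightarrow\qlp$ follows from L\'evy's continuity theorem: the characteristic functions $t\mapsto\exp\bigl(i\,t^\top z^k-\tfrac12 t^\top \Clp^k t\bigr)$ converge pointwise in $t$ to $t\mapsto\exp\bigl(i\,t^\top\zlp-\tfrac12 t^\top\Clp t\bigr)$, which is the characteristic function of $\Nc(\zlp,\Clp)$.

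There is no real obstacle here — the entire argument is a continuity statement riding on Proposition \ref{prop:IASconvergence}. The only point worth isolating is the positivity of the limiting $\theta^*$, which is needed both to ensure the Hessian is well defined in a neighborhood of $\zlp$ and to guarantee the invertibility of $H(\zlp)$ via strict convexity. Both follow from Proposition \ref{prop:IASconvergence} together with $\beta>3/2$.
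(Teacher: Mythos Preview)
Your proof is correct and follows exactly the paper's approach: reduce weak convergence of Gaussians to convergence of means and covariances, invoke Proposition \ref{prop:IASconvergence} for the means, and use continuity of the Hessian (together with continuity of matrix inversion at an invertible limit) for the covariances. The paper's own proof is a two-line sketch of precisely this argument, citing the mean/covariance characterization of weak convergence of Gaussians and the continuity of the Hessian; your version simply fills in the details the paper leaves implicit.
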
 
\begin{proof}
Weak convergence of Gaussians is equivalent to convergence of their means and covariances \cite{bogachev1998gaussian}. The result follows from  Proposition  \ref{prop:IASconvergence}  and continuity of the Hessian. 
\end{proof}

\section{Variational Inference}\label{sec:VI}
In this section, we introduce our variational approach for posterior approximation. We provide the necessary background on variational inference in Subsection \ref{ssec:31}. The main algorithm is described in Subsection \ref{ssec:32}, and convergence guarantees are discussed in Subsection \ref{ssec:33}. Our presentation is parallel to that of the previous section. 

\subsection{Background and Mean-field Assumption}\label{ssec:31}
Variational inference is a popular technique \cite{bishop,jordan1999introduction,wainwright2008graphical,blei2017variational} for approximating the posterior distribution $p(z \g y)$ of some unknown parameter $z$ given  data $y$. We will be concerned with approximating the posterior $p(z \g y)$ given by \eqref{eq:posterior} with $z = (u,\theta).$ The goal of variational inference is to find an approximating distribution $q^*(z)$ which is close to the posterior, but tractable. Then, approximate Bayesian inference can be performed using $q^*(z)$ rather than $p(z\g y).$ The approximating distribution $q^*(z)$ is defined as the (numerical) solution to an optimization problem. Precisely, one specifies a family $\mathcal{D}$ of tractable distributions and sets 
\begin{equation}\label{eq:vi}
q^*(z) := \arg \min_{q \in \mathcal{D}} \, \dkl \bigl( q(z) \| p(z \g y) \bigr), 
\end{equation}
where $\dkl$ denotes the Kullback-Leibler divergence. The above minimization can be reformulated as maximizing  the \textit{evidence lower-bound}  (ELBO) given by:
\begin{align}
  \ELBO(q) &:= \EEq{\log p(z, y)} -  \EEq{\log q(z)}  \label{eq:elbo1} \\
   &= \log p(y) - \dkl \bigr(q(z) \| p(z \g y) \bigr) \label{eq:elbo2} \\
   & =  \EEq{\log p(y \g z)} -  \dkl \bigl(q(z) \| p(z) \bigr). \label{eq:elbo3}
\end{align}
Note from Equation \eqref{eq:elbo1} that $\ELBO(q)$ can be evaluated without computing the evidence $p(y),$ which is often intractable. Since the Kullback-Leibler divergence is non-negative, Equation \eqref{eq:elbo2} shows that $\ELBO(q)$ indeed provides a lower-bound on the log-evidence.  This property can be used for model selection, since larger ELBO indicates a higher probability of the data being generated by a particular model. Finally, Equation \eqref{eq:elbo3} shows that the optimal $q^*(z)$ finds a compromise between maximizing the expected log-likelihood and minimizing the Kullback-Leibler divergence to the prior $p(z)$.

For reasons discussed below, we will choose the variational family to be 
\begin{equation}
\mathcal{D} : = \Bigl\{ q(z) :  q(u, \bt) = q(u) q(\theta), \quad q(\theta) = \prod_{i=1}^{d} q(\bti) \Bigr\}.
\end{equation}
This \textit{mean-field} family is a popular choice in variational inference because it enables efficient numerical optimization of the ELBO using the Coordinate Ascent Variational Inference (CAVI) algorithm \cite{bishop}. Note, however, that under the mean-field approximation the variational distribution is unable to capture the dependence structure between $u$ and $\theta.$ This is not an assumption on the data model, but rather is implied by the choice of the variational family $\mathcal{D}$. In the next subsection, we derive a CAVI algorithm for the hierarchical Bayesian model \eqref{eq:model}. We shall see that this variational algorithm shares the ease of implementation of the IAS algorithm.

\subsection{The Variational Iterative Alternating Scheme (VIAS)}\label{ssec:32}
The variational distribution $q^*(z)$ is, by definition, the closest distribution in $\mathcal{D}$ to the posterior $p(z\g y)$, where closeness is quantified using the Kullback-Leibler divergence. Equivalently, $q^*(z)$ is the distribution that maximizes the ELBO in $\mathcal{D}.$ Here we propose and analyze a Variational Iterative Alternating Scheme (VIAS) to maximize the ELBO, consisting of the following two separate maximization steps:
\begin{enumerate}
\item {\bf Initialize} $q^0(u),$  $k = 0$. 
\item {\bf Iterate  until convergence:}
\begin{enumerate}[(i)]
\item Update $q^{k+1}(\theta) = \arg \max_{q(\theta)}  \ELBO \bigl( q^k(u)  q(\theta) \bigr) . $
\item Update $q^{k+1} (u) = \arg \max_{q(u)}   \ELBO \bigl( q(u)  q^{k+1}(\theta) \bigr)   .$
\item  $k \rightarrow k+1.$
\end{enumerate}
\end{enumerate}
Note that the structure of VIAS is identical to that of IAS, replacing the energy $\J(z)$ over unknown $u$ and parameters $\theta$ with the energy $\ELBO(q(z))$ over their joint distribution. Let $q^k(z) := q^k(u) q^k(\theta).$ By construction, $\ELBO(q^k(z))$ is monotonically increasing with $k$. In other words, the Kullback-Leibler divergence between $q^k(z)$ and the posterior decreases monotonically.

VIAS also shares with IAS its ease of implementation. The following well-known result \cite{bishop} gives a characterization for the maximizing distributions in steps (i) and (ii).
\begin{proposition}[Optimization of ELBO in Mean-field Variational Inference]\label{prop:ELBOmeanfield}
It holds that
\begin{align}
\arg \max_{q(u)}  \ELBO \bigl( q(u)  q(\theta) \bigr) \propto \exp \Bigl( \EEt{  \log p(y, z) } \Bigr),  \label{eq:quupdate}  \\
  \arg \max_{q(\theta)}  \ELBO \bigl( q(u)  q(\theta) \bigr)  \propto \exp \Bigl( \EEu{  \log p(y, z) } \Bigr). \label{eq:qtupdate}
  \end{align}
\end{proposition}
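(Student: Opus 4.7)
The plan is to carry out the standard calculus-of-variations argument for coordinate-wise ELBO maximization, recasting the objective as a negative Kullback--Leibler divergence between $q(u)$ and an implicitly defined target density. Since the two claims are symmetric in $u$ and $\theta$, I would only prove \eqref{eq:quupdate}; \eqref{eq:qtupdate} then follows by exchanging the roles of $u$ and $\theta$ throughout.

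First, I would fix $q(\theta)$ and use the form \eqref{eq:elbo1} of the ELBO together with the mean-field factorization $q(z) = q(u)\,q(\theta)$ to expand
\begin{align*}
\ELBO\bigl(q(u)q(\theta)\bigr) &= \int q(u)q(\theta)\log p(y,z)\,du\,d\theta - \int q(u)q(\theta)\log q(u)\,du\,d\theta \\
&\qquad - \int q(u)q(\theta)\log q(\theta)\,du\,d\theta.
\end{align*}
The last term is a constant with respect to $q(u)$, and using $\int q(\theta)\,d\theta = 1$ the second term reduces to $-\int q(u)\log q(u)\,du$. Interchanging the order of integration in the first term, the $q(u)$-dependent part of the ELBO becomes
\[
\int q(u)\, \EEt{\log p(y,z)}\,du - \int q(u)\log q(u)\,du + \text{const}.
\]

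Next, I would define the (unnormalized) density $\tilde p(u) \propto \exp\bigl(\EEt{\log p(y,z)}\bigr)$, with normalizing constant $Z := \int \exp\bigl(\EEt{\log p(y,z)}\bigr)\,du$ (assumed finite; this is the only analytic point that needs checking, and it holds for the hierarchical model \eqref{eq:model} because the $\theta$-expected log-joint is a strictly concave quadratic in $u$). Substituting $\EEt{\log p(y,z)} = \log \tilde p(u) + \log Z$ into the display above and collecting terms, the $q(u)$-dependent part of the ELBO equals
\[
-\dkl\bigl(q(u)\,\|\,\tilde p(u)\bigr) + \log Z + \text{const},
\]
where the additive terms do not depend on $q(u)$.

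Finally, since the Kullback--Leibler divergence is non-negative and vanishes iff its two arguments agree almost everywhere, the unique maximizer over admissible densities $q(u)$ is $q^\star(u) = \tilde p(u) \propto \exp\bigl(\EEt{\log p(y,z)}\bigr)$, which is \eqref{eq:quupdate}. The identity \eqref{eq:qtupdate} follows by the symmetric computation in which $q(u)$ is fixed and one optimizes over $q(\theta)$; here one additionally uses that the further factorization $q(\theta) = \prod_i q(\theta_i)$ imposed by the mean-field family is automatically respected, because under the hierarchical model \eqref{eq:model} the expected log-joint decouples as a sum over coordinates of $\theta$, making the exponential on the right-hand side a product measure. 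The main subtlety to address carefully is this normalizability/finiteness issue and the justification that the resulting $q^\star$ lies in $\mathcal{D}$; everything else is a direct rearrangement.
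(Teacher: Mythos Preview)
Your argument is correct and is exactly the standard calculus-of-variations derivation; the paper does not give its own proof of this proposition but simply cites \cite{bishop}, whose argument is precisely the one you have written out. There is nothing to add.
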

We next describe how these characterizations, which are a consequence of the mean-field assumption, imply that the maximizing distributions $q(u)$ and $q(\theta)$ in steps (i) and (ii) belong to certain parametric families. Precisely, we shall see in Subsection \ref{sssec:updatingqu} that \eqref{eq:quupdate} implies that $q(u) = \Nc(m,C)$ and in Subsection \ref{sssec:updatingqt}  that \eqref{eq:qtupdate} implies that $q(\theta_i) = \text{GIG}(b, r_i,s),$ where $\text{GIG}$ denotes the generalized inverse Gaussian distribution. These considerations will reduce the implementation of steps (i) and (ii) to an explicit recursion in the variational parameters. 

Before delving into the derivations, we recall for convenience, and later reference, that a random variable $\theta_i \sim \text{GIG}(b, r_i,s)$ has probability density function
\begin{equation}\label{eq:GIGpdf}
q(\bti \g b, r_i, s) = \frac{(b / r_i )^{s/2}}{2 K_s(\sqrt{ r_i b_i})} \bti^{s-1} e^{-(b \bti +  r_i/\bti)/2},
\end{equation}
where $K_s$ denotes the modified Bessel function of the second kind. Moreover, the following identities hold
\begin{align}\label{eq:identitiesGIG}
\begin{split}
\EEt{\bti}  & = \frac{ K_{s +1}(\sqrt{ r_i b}) }{ K_{s}(\sqrt{ r_i b})}  \cdot \sqrt{ r_i /b} \, ,\;  \\
\mathbb{V}_{q(\theta)} [ \bti ]  &= \frac{ K_{s +2}(\sqrt{ r_i b}) }{ K_{s}(\sqrt{ r_i b})}  \cdot ( r_i /b) - ( \EEt{\bti}) ^2  ,   \;    \\
\EEt{1/\bti} &=  \frac{ K_{s -1}(\sqrt{ r_i b}) }{ K_{s}(\sqrt{ r_i b})}  \cdot \sqrt{b/ r_i} \; .
\end{split}
\end{align}
The first and second identities can be used to compute credible intervals with the variational distribution, while the third identity will be used to derive the update for $q(\theta).$ For further properties of the generalized inverse Gaussian distribution, we refer to \cite{lemonte2011exponentiated}.

\subsubsection{Updating $q(u)$} \label{sssec:updatingqu}

To derive the update for $q(u)$, we use \eqref{eq:quupdate}. Note that
\begin{align*}
   \log q(u)  &\propto \EEt{\log p(y, u, \theta)}  \\
                &\propto \EEt{\log p(y \g u) + \log p(u \g \bt) + \log p(\bt) } \\
 & \propto  \EEt{-\frac12 |Au-y|^2_{ \Gamma } - \frac12 \sum_i \frac{u_i^2}{ \bt_i}} \\
  & \propto   -\frac12 |Au-y|^2_{\Gamma} - \frac12 u^\top L u,
\end{align*}
where $L = \text{diag}\Bigl( \EEt{1/ \theta} \Bigr)$. This implies that $q(u)$ is Gaussian with mean $m$ and covariance $C$ given by
\begin{align*}
m &= (A^\top \Gamma^{-1} A + L)^{-1} A^\top \Gamma^{-1} y, \\
C &= (A^\top \Gamma^{-1} A + L)^{-1}.
\end{align*}
The expectations $\EEt{1/ \theta_i}$ in the diagonal of $L$ can be obtained analytically using the fact (derived in the next subsection) that $q(\theta_i) = \text{GIG}(b, r_i,s), $ together with the third identity in \eqref{eq:identitiesGIG}.  Similar to IAS, the sparsity of the VIAS estimate of the parameter $u$ is controlled by the regularization coefficient matrix $L$. The larger the diagonal component of $L$ is, the smaller the corresponding component of $m$ will be.

\subsubsection{Updating $q(\theta)$} \label{sssec:updatingqt}
To update $q(\theta) = \prod_{i=1}^d q(\theta_i)$, we use independence and \eqref{eq:qtupdate} to obtain updates for each $q(\theta_i).$ Note that
\begin{align*}
\log q(\bti) &\propto \EEu{\log p(y \g u) + \log p(u \g \bt) + \log p(\bt) } \\
&\propto  \log p(\theta_i) +  \EEu{-\log \sqrt{\bti} - \frac{ u_i^2}{2\bti}} \\
&\propto \Bigl(\alpha- \frac32 \Bigr) \log \bti - \beta \bti - \frac{1}{2\bti}  \left(m_i^2 + C_{ii}\right),
\end{align*}
where in the last equality we used that $\EEu{u_i^2} = m_i^2 + C_{ii}$ for $q(u) = \Nc(m,C).$
Recalling \eqref{eq:GIGpdf}, this implies that $q(\theta_i) = \text{GIG}(b, r_i,s)$ where
$b = 2\beta$, $s = \alpha-0.5$ and $ r_i = m_i^2 + C_{ii}. $ 

Together, the update rule for $q(u)$ given $q(\theta)$, and $q(\theta)$ given $q(u)$, specify the VIAS. A pseudo-code for VIAS is given in Algorithm \ref{alg_2}.

\begin{figure}
\makebox[\linewidth]{%
  \begin{minipage}{\dimexpr 0.75\linewidth+5em}
\begin{algorithm}[H]
\caption{Variational Iterative Alternating Scheme (VIAS) \label{alg_2}}
\STATE {\bf Input}: Data $y$, matrix $A$. Prior hyperparameters: $ \alpha, \beta.$ \\
\STATE Initialize: $m^0, C^0, k=0.$ Set $b = 2 \beta$, $s = \alpha - 0.5.$ \\
\STATE {\bf For} $k = 0, 1, \ldots$ until convergence {\bf do}:
\begin{enumerate}[(i)]
\item Update $ r^{k+1}_i = (m^{k}_i)^2 + C^{k}_{ii} $
for each $i=1,\dots, d.$
\item Set 
\begin{equation}\label{eq:diagonal}
L = \text{diag}(\ell), \quad \quad \ell_i = \frac{ K_{s -1}\left(\sqrt{ r_i^{k+1}b}\right) }{ K_{s}\left(\sqrt{ r_i^{k+1}b}\right)}  \cdot \sqrt{\frac{b}{ r_i^{k+1}}}, \
\end{equation}
and update
\begin{align*}
m^{k+1} &= (A^\top \Gamma^{-1} A + L)^{-1} A^\top \Gamma^{-1} y , \\
C^{k+1} &= (A^\top \Gamma^{-1} A + L)^{-1}. 
\end{align*}
\end{enumerate}
\STATE{\bf end for} \\
\STATE{\bf Output}:  Variational approximation $p(z\g y) \approx q^{k+1}(z) = q^{k+1}(u) q^{k+1}(\theta),$ where
$$q^{k+1}(u) = \Nc(m^{k+1},C^{k+1}),  \quad q^{k+1}(\theta_i) = \text{GIG}(b, r_i^{k+1},s).$$
\end{algorithm}
  \end{minipage}}
\end{figure}

\begin{remark}
Using the Woodbury matrix identity, the Kalman-type expression in (\ref{eq:KalmanIAS}) can also be used to obtain VIAS updates for $m$ and $C$ without computing high dimensional matrix inversions 
\begin{align*}
m^{k+1} &= G y,  \quad \quad G:= L^{-1} A^\top(A L^{-1} A^\top +  \Gamma)^{-1},\\
C^{k+1}  &= (I-G A) L^{-1}.  
\end{align*}
The update for $m$ could also be implemented using conjugate gradient for least squares and an early stopping condition. 
\end{remark}

\begin{remark}
The ELBO can be computed at each iteration, and the relative change in the ELBO can be used as a stopping criteria, since this algorithm maximizes the ELBO. The relative change in the variational parameters along VIAS iterates could also be monitored to determine stopping. \qed
\end{remark}

\subsubsection{Variational Parameters and VIAS} 
On deriving the CAVI updates, we obtain that $q(u) = \Nc(m,C)$ and $q(\theta) = \text{GIG} (b,  r_i, s).$ The parameters $m, C, b,  r,$ and  $s$ are known as the variational parameters. 

 The parameters $m$ and $C$ are of the greatest interest to us, since they will determine the prediction of the unknown quantity of interest $u$. While $m$ gives the approximate posterior mean,  $C$ will allow us to obtain credible intervals on the prediction, and to understand the correlation between different components of $u$. 
 
 In the distribution of $\theta$, the values of the parameters $b$ and $s$  are directly related to the hyperparameters describing the prior gamma distribution: $b = 2\beta, s = \alpha-0.5$.
 Thus, the choice of hyperparameters directly affects the variational posterior. 
 Note that each diagonal component of the matrix $L$ in \eqref{eq:diagonal} will be large if $b$ or $\beta$ are large. On the other hand, each diagonal element is a decreasing function in $s \in (-0.5, 0.5) $ and therefore in $\alpha \in (0, 1)$. 
As a consequence, if one expects sparse structure in the true parameter $u$, choosing a small $\alpha$ value with a moderately large $\beta$ value would lead to adequate shrinkage. Each diagonal component of the matrix $L$ diverges to infinity as $\beta$ increases. So, to avoid shrinking each component of the parameter estimate too close to zero, one should not use extremely large $\beta$ value.  The mean of the prior $\alpha/\beta$ should be chosen to be close to the expected size of the unknown $\theta$, if prior information on this size is available. In addition to these heuristics, we will illustrate in Section \ref{sec: experiments} how the hyperparameters can be learned by a simple model selection procedure. Our numerical experiments show that the reconstructions are not sensitive to perturbation of the model hyperparameters, but that obtaining appropriate ballpark values for the hyperparameters through model selection can substantially improve the reconstruction. 



All the information about the variational distribution is stored in the five parameters; two of them are fixed, and the other three are interdependent. The CAVI algorithm updates these three parameters iteratively as follows: 
\begin{itemize}
\item Keeping $m$ and $C$ constant,  update each $ r_i$ with the formula:
$$ r_i = m_i^2 + C_{ii}.$$
\item Keeping $ r $ constant, update $m$ and $C$:
\begin{align*}
C &= (A^\top \Gamma^{-1} A + L)^{-1} \text{   with  } L_{ii} = \Expect \Bigl[\frac{1}{\theta_i}\Bigr], \\
 m &= (A^\top \Gamma^{-1} A + L)^{-1} A^\top \Gamma^{-1} y. 
\end{align*}
\end{itemize}

\subsection{Convergence and Initialization of VIAS}\label{ssec:33}
The $\ELBO$ in \eqref{eq:quupdate} and \eqref{eq:qtupdate} can now be written as a function of the variational parameters rather than the variational distribution, i.e., $\ELBO \bigl( q(u)  q(\theta)\bigl) = \ELBO(m,C,r)$. Then, the CAVI updates can be rewritten  
\begin{align*}
     \arg \max_{q(\theta)}\ELBO \bigl( q(u)  q(\theta)\bigl) &= \arg \max_{r}\ELBO \bigl(m, C,  r\bigl), \\
     \arg \max_{q(u)}\ELBO \bigl( q(u)  q(\theta)\bigl) &= \arg \max_{m, C}\ELBO \bigl(m, C,  r\bigl).
\end{align*}
As the parameters updated through VIAS are $m, C$ and $r$, we will ignore terms in the ELBO that do not depend on them. We will still denote the remaining expression as $\ELBO(m,C,r)$. A straightforward calculation shows that
\begin{equation*} 
 \ELBO(m,C,r)  = -\frac{1}{2}\text{tr}(ACA^\top) - \frac{1}{2}|Am-y|^2 + \frac{1}{2}\log\text{det} C
- \frac{s}{2}\sum_{i=1}^d \log \frac{b}{r_i} + \sum_{i=1}^d \log 2 K_s(\sqrt{r_ib}).
\end{equation*}
The following result, which follows from Theorem 2.2 of \cite{bezdek1987local}, shows local convergence of VIAS.
\begin{proposition}\label{prop:VIASconvergence}
Suppose that $\ELBO(m,C,r)$ has a local maximum at $(m^*, C^*, r^*)$ and that the Hessian of $\ELBO(m,C,r)$ is negative definite at $(m^*, C^*, r^*)$. Then there is a neighborhood $\mathcal{U}$ of $(m^*, C^*, r^*)$ such that, for any initialization $(m^{0} , C^{0}, r^{0}) \in \mathcal{U}$, VIAS converges to $(m^*, C^*, r^*)$.
\end{proposition}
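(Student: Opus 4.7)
The plan is to recognize VIAS as a two-block grouped coordinate ascent on the smooth objective $\ELBO(m,C,r)$ and then invoke \cite[Theorem 2.2]{bezdek1987local}, which is the standard local convergence result in precisely this setting. The two blocks are $(m,C)$, the parameters of $q(u)$, and $r=(r_1,\dots,r_d)$, the parameters of $q(\theta)$; by the derivations in Subsections \ref{sssec:updatingqu} and \ref{sssec:updatingqt}, VIAS updates each block by exact partial maximization of the ELBO while holding the other block fixed, so the abstract alternating-optimization framework applies verbatim.

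First I would verify the regularity hypotheses on the open feasible set $\Omega := \{(m,C,r) : C\succ 0, \ r_i>0 \text{ for all } i\}$. Smoothness of $\ELBO$ is immediate from the closed-form expression displayed before the proposition: $\log\det C$, $\operatorname{tr}(ACA^\top)$, $|Am-y|^2$, $\log(b/r_i)$ and $\log K_s(\sqrt{r_i b})$ are each $C^\infty$ on $\Omega$, since the modified Bessel function $K_s$ is real-analytic and strictly positive on $(0,\infty)$. The critical-point condition and negative definiteness of $\nabla^2\ELBO(m^*,C^*,r^*)$ are given by hypothesis, and continuity of $\nabla^2\ELBO$ then yields a convex neighborhood $\mathcal{U}\subset\Omega$ of $(m^*,C^*,r^*)$ on which the full Hessian, and hence each diagonal block-Hessian $\nabla^2_{(m,C)}\ELBO$ and $\nabla^2_r\ELBO$, remains negative definite.

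Second I would use this local strict concavity to establish the three ingredients required by Bezdek's theorem: (a) each partial maximization has a unique solution on $\mathcal{U}$; (b) the resulting partial-optimizer maps are continuous, via the implicit function theorem applied to the first-order conditions; and (c) the VIAS iterates remain in $\Omega$, because the $r$-update $r_i^{k+1} = (m_i^k)^2 + C_{ii}^k$ is automatically positive and the $(m,C)$-update $C^{k+1} = (A^\top\Gamma^{-1}A + L)^{-1}$ is positive definite whenever the diagonal $L$ is positive, which is guaranteed by the strict positivity of the Bessel ratio $K_{s-1}/K_s$ on $(0,\infty)$. With (a)--(c) in hand, Bezdek's theorem applies and its Lyapunov argument closes the proof: $\ELBO$ is monotonically nondecreasing along the iterates by construction, bounded above on $\mathcal{U}$, and strictly increases unless the current iterate already coincides with $(m^*,C^*,r^*)$, so any VIAS trajectory initialized in $\mathcal{U}$ converges to that critical point.

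The main obstacle is arranging $\mathcal{U}$ small enough that the iterates genuinely stay trapped inside it rather than escaping to a competing critical point. The standard device is to shrink $\mathcal{U}$ to a sublevel set of $\ELBO$ of the form $\{z\in\mathcal{U} : \ELBO(z)\ge \ELBO(m^*,C^*,r^*) - \epsilon\}$; this set is forward-invariant under the VIAS map because of monotonicity of $\ELBO$, and by strict concavity on $\mathcal{U}$ the only fixed point it contains is $(m^*,C^*,r^*)$. Beyond this, no delicate estimates are needed: the local convergence is really a consequence of smoothness of $\ELBO$, strict positivity of the Bessel ratio, and the general theory of alternating optimization near a nondegenerate strict maximum.
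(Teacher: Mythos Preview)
Your approach is exactly the paper's: the proposition is stated as an immediate consequence of \cite[Theorem~2.2]{bezdek1987local}, with no further argument given. Your write-up is more detailed than the paper's single-sentence citation, explicitly checking the smoothness, well-definedness, and forward-invariance hypotheses needed to apply Bezdek's alternating-optimization theorem, but the underlying strategy is identical.
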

Unfortunately, the convergence to the global maximum of $\ELBO(m,C,r)$ is not guaranteed. This is because the function $\ELBO(m,C,r)$ can have multiple local maxima. We illustrate this phenomenon in the univariate case $u\in \R$, where $y = Au + \eta, \, \eta \sim \Nc(0,I_n).$ Denoting $C = c \in (0,\infty)$ and setting $b = 1,$ we then have
\begin{equation}\label{eq:1DELBO}
\ELBO(m,c,r) = -\frac{A^\top A}{2}(m^2 + c) + m^\top A^\top y + \frac{1}{2}\log c + \frac{s}{2}\log r + \log\left(2K_s(\sqrt{r}) \right).
\end{equation}
VIAS maximizes \eqref{eq:1DELBO} along the manifold $\mathcal{M}$ given by
$$
\mathcal{M} = \{(m,c,r)|~m = c A^\top y, ~r = (A^\top y)^2c^2 + c \}.
$$
Denoting $y_A := A^\top y$, the expression \eqref{eq:1DELBO} on this manifold becomes
\begin{equation}\label{eq:ELBOmanifold}
-\frac{A^\top A}{2}(c + y_A^2c^2) + y_A^2 c + \frac{1}{2}\log c + \frac{s}{2}\log \left(y_A^2c^2 + c\right) + \log\left(2K_s\left(\sqrt{y_A^2c^2 + c} \right) \right).
\end{equation}

To gain a better understanding of the $\ELBO(m,c,r)$ on the manifold $\mathcal{M}$, we provide plots of \eqref{eq:ELBOmanifold} with $n=50$ for the following three cases: 1) $A^\top A  = 1, y_A = 3, s = -0.49$; 2) $A^\top A = 2, y_A = 4, s = -0.499;$ and 3) $A^\top A = 5, y_A = 10, s = -0.499$. Figure \ref{Figure0} shows that there are multiple local maxima, which suggests that the initialization of VIAS can have an effect on the final point of convergence. The plots in Figure \ref{Figure0} indicate that, in each case, the global maximizer is the local maximizer farthest away from zero. For this reason, we recommend initializing VIAS with a covariance $C^{0} \succ \lambda I_d$, with $\lambda>0$ far away from zero. In addition, we expect $|y_A|$ to be large when the noise level is high. In such a case the global maximum of \eqref{eq:ELBOmanifold} will be far from the origin, as seen in Figure \ref{Figure1}, which further justifies the suggested initialization. From the perspective of quantifying uncertainties in $u$, initializing at a large covariance ensures convergence of the VIAS iterates $C^k$ to a matrix $C$ that gives conservative credible intervals for the reconstruction.

\begin{figure}[H]
		\centering 
	\includegraphics[height = 4.5cm, width = 5cm]{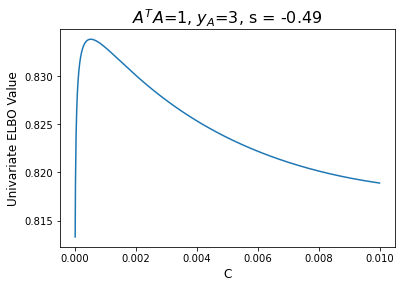}
	\includegraphics[height = 4.5cm, width = 5cm]{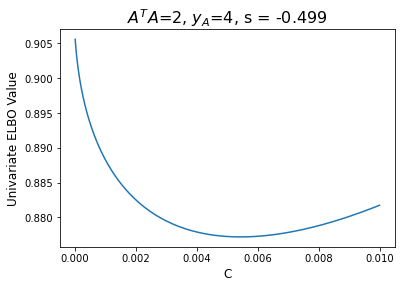}
	\includegraphics[height = 4.5cm, width = 5cm]{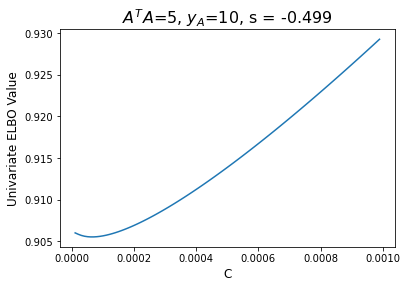}
	
	\includegraphics[height = 4.5cm, width = 5cm]{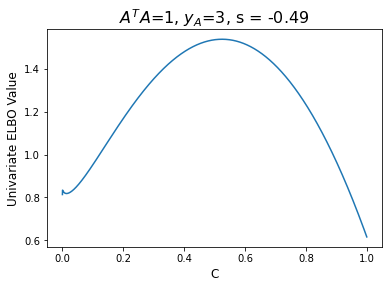}
	\includegraphics[height = 4.5cm, width = 5cm]{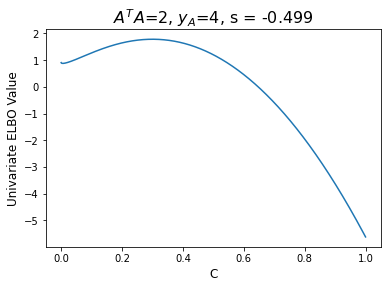}
	\includegraphics[height = 4.5cm, width = 5cm]{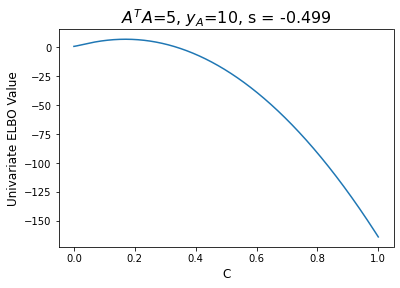}
	\caption{\label{Figure0} Top row: plots of \eqref{eq:ELBOmanifold} near zero. Bottom row: plots of \eqref{eq:ELBOmanifold} for [0,1].}
\end{figure}

\begin{figure}[H]
		\centering 
	\includegraphics[height = 4.5cm, width = 5cm]{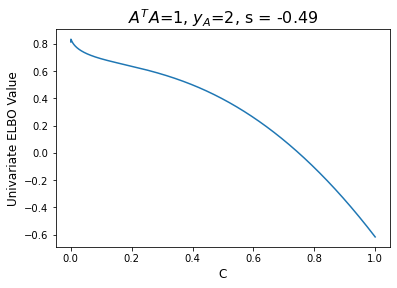}
	\includegraphics[height = 4.5cm, width = 5cm]{ELBO_ATA1_YTA3_S049_1.png}
	\includegraphics[height = 4.5cm, width = 5cm]{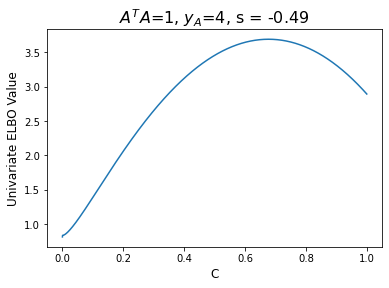}
	\caption{\label{Figure1} Plots of \eqref{eq:ELBOmanifold} for $y_A = 2, 3, 4$ with $A^\top A = 1, s = -0.49.$ }
\end{figure}

In Section \ref{sec: experiments}, we introduce hyperparameter tuning based on the ELBO values. Typically, the calibrated $s$ values were near -0.5. Accordingly, we have characterized a region of $(|y_A|, A^\top A)$ values where the ELBO  has more than one local maximum for $s = -0.499$. We used a grid-search to find a local maximum with a mesh step size of $10^{-7}$ ranging from zero to one. Multiple local maxima occurred in the beige-colored region in Figure \ref{Figure01}. In the case where multiple local maxima exist, we observed that the global maximum is the farthest away from zero. For smaller $s$ values,  i.e., closer to -0.5, which would promote sparse structure in the VIAS estimate, we observed a similar pattern to the one in the left plot of Figure \ref{Figure01}. 

To assess if data realizations that give multiple local maxima are likely to occur, we  ran an empirical study to estimate $\mathbb{P}(|y_A| \ge k)$ for $k \in \{0, \ldots, 10\}$. To do so, we first sampled $\theta$ from the gamma distribution with shape parameter 0.001 and rate parameter 1. In addition, each component of the vector $A$ satisfying the prespecified $A^\top A$ value was obtained from a uniform distribution. Next we generated a scalar $u$ from the Gaussian distribution with mean zero and variance $\theta$. Then we randomly sampled $y = Au + \eta$, where $\eta \sim N(0, I_n)$ for $10^4$ times and obtained the proportion of times when the event $\{|y_A| \ge k\}$ occurs. From Figure \ref{Figure01}, we can observe that such an event can occur with a zero probability.

In all our experiments, the global maximum of the ELBO was the local maximum farthest away from zero.  Based on such experiments and on the computed examples in the next section, we believe VIAS is very likely to converge to the global maximum of the ELBO in most practical applications, as long as it is initialized as suggested above.

 \begin{figure}[H]
		\centering 
	\includegraphics[height = 4.5cm, width = 5.5cm]{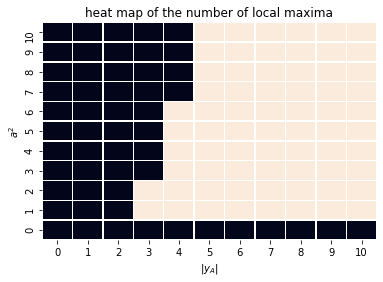}
	\includegraphics[height = 4.5cm, width = 6.5cm]{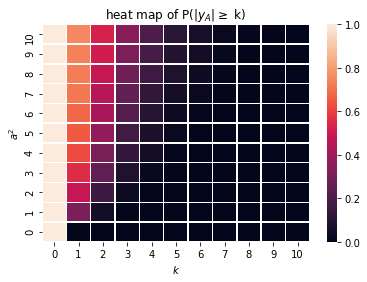}
	\caption{\label{Figure01}
	Common parameter: $s = -0.499$.
	Left: heat map of the number of local maxima of the ELBO (black: one maximum, beige: more than one maxima). Right: heat map of the approximated $\mathbb{P}(|y_A| \ge k)$. Together, the two plots show that it is unlikely to observe data that gives an ELBO with more than two local maxima.} 
\end{figure}

\section{Computed Examples} \label{sec: experiments}
In this section, we report the performance of VIAS in four computed examples, assessing its accuracy and its ability to provide meaningful uncertainty quantification. We also explore how to exploit the variational inference framework to guide the choice of model hyperparameters. 

\subsection{Truth and Data from Hierarchical Model}\label{ssec:Hiearchical}
We first apply VIAS to data generated from the hierarchical model  \eqref{eq:model}. This serves to illustrate the role of the hyperparameters in the hierarchical model, and also the application of our proposed variational inference technique. We compare the accuracy of point estimates constructed with VIAS and IAS, as well as the uncertainty quantification given by VIAS and the iterative Laplace approximation in Section \ref{sec: hierarchical}. Finally, we show how the ELBO can be used to select the model hyperparameters, and we demonstrate that the accuracy of the reconstruction obtained with this model selection approach is comparable to the accuracy achieved using the true hyperparameters. 


\begin{figure}[H]
\centering
\includegraphics[height = 4.5cm]{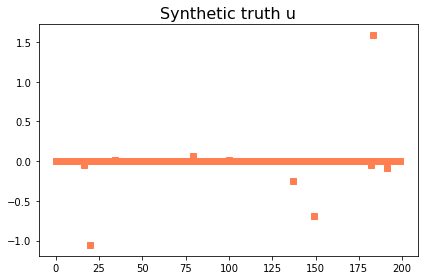}
\includegraphics[height = 4.5cm]{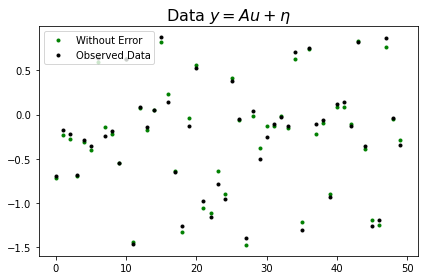}
\includegraphics[height = 4.5cm, width = 5cm]{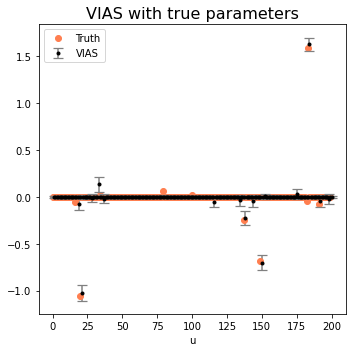}
\includegraphics[height = 4.5cm, width = 5cm]{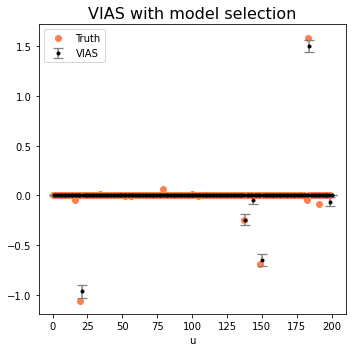}
\includegraphics[height = 4.5cm, width = 5cm]{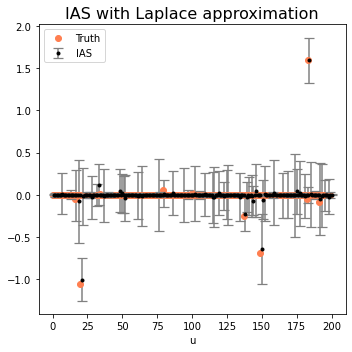}
	\caption{\label{fig:Gen_Data} First row: synthetic truth and data (Subsection \ref{ssec:Hiearchical}). Second row: computed results with VIAS and IAS. The VIAS $95 \%$ credible intervals are shorter while providing suitable coverage.}
\end{figure}

\subsubsection{Setting}
We sample $\theta_i$ values from a gamma distribution with $\alpha = 0.005 \, , \, \beta = 0.05$ (mean $0.1$ and variance $2$). Conditional on these $\theta_i$ values, we generate the synthetic truth $u \in \mathbb{R}^{d}$ by sampling independently $u_i \sim \Nc(0, \bti)$. The data $y \in \mathbb{R}^{n}$ is generated by $y = Au + \eta$,  where  $A \in \mathbb{R}^{n \times d}$ is randomly generated with each entry being uniformly distributed  between $0$ and $1$. 
We choose $ d = 200$ and $n = 50$ so that the problem is severely underdetermined. The error term $\eta$ is sampled from a normal distribution $\Nc(0, \gamma^2 I_{50}),$ where $\gamma$ is chosen to be 5\% of the max-norm of $Au$. Figure \ref{fig:Gen_Data} shows the generated synthetic truth $u$ and data $y$. The generated $u \in \R^{200}$ is very sparse with only $4$ distinctly large components of varying sizes.

\subsubsection{Numerical Results with True Hyperparameters}
Here we report numerical results for VIAS and IAS. For VIAS, we set hyperparameters $\alpha$ and $\beta$ to be exactly those used to generate the data, namely $\alpha = 0.005 \, , \, \beta = 0.05.$ This determines the choice of variational parameters $s = -0.495$ and $b = 0.1$. For IAS, as we expect sparse structure in the parameter of interest on a unit scale, we set the parameter $\alpha = 1$ with $\tilde \beta = 0.00001$. Figure \ref{fig:Gen_Data}  displays the results for VIAS and IAS.  For the initializations, we set both $\theta^{0}$, $m^0 \in \mathbb{R}^{200}$ to be the all-ones vector, and $C^0 \in \mathbb{R}^{200 \times 200}$ to be the identity matrix. Both algorithms yielded successful reconstruction of $u$. The credible intervals for the VIAS estimates are significantly shorter than the intervals from the IAS using Laplace approximation. The implication is that, unlike VIAS, IAS may not give sufficient shrinkage off the support. To quantify the accuracy of these credible intervals, we conduct repeated simulations fixing the matrix $A$ and synthetic truth $u$, while resampling the error term $\eta$ to generate different $y$ values. On conducting 1000 such simulations and generating 200 credible intervals for each component of $u$, we observe that the 95\% credible interval for the components of $u$ covers the true values 96.06\%  of the time for VIAS with the true hyperparameters, and 98.67\% for IAS. Thus, VIAS maintains similar accuracy to  IAS with much narrower credible intervals, providing superior uncertainty estimation. 

The left and the middle plots in Figure \ref{fig:Gen_Convergence} show the convergence of VIAS using the true hyperparameters. The ELBO value stabilizes after 100-200 iterations. We also illustrate the decay of the relative change in max-norm of the variational parameters along the VIAS iterations, which can be seen in the middle plot. The number of iterations that IAS needs to stabilize is significantly lower, of the order of 10. \nc

\subsubsection{Numerical Results with Model Selection}
In this subsection, we investigate the learning of the hyperparameters $\alpha, \beta.$ For this purpose, we use the ELBO as a model selection tool, choosing the hyperparameters which maximize the ELBO. Since the ELBO is a lower bound for the marginal probability of the data, larger ELBO values suggest a better fit to the data.  In practice, we obtained ELBO values after 300 iterations of VIAS for each choice of ($\alpha$, $\beta$) values in a two-dimensional grid. The choice of ($\alpha$, $\beta$) value which led to the maximal ELBO value was used as our hyperparameter values. 

 One would expect the hyperparameters which maximize the ELBO to be close to the true model hyperparameters. However, on conducting the model selection, we find that this is not the case. The hyperparameters found using model selection are  $\alpha = 0.001$ and $\beta = 1623$, whose corresponding ELBO value after 1000 iterations was roughly around $3899$, a significantly larger value than the ELBO with the true hyperparameters, which is $3243$, which can be seen in Figure \ref{fig:Gen_Convergence}. Despite this large difference in the hyperparameters and the ELBO, the resulting reconstructions are similar, as displayed in Figure \ref{fig:Gen_Data}. Relative to the results obtained with the true parameters, the ELBO-selected model slightly underestimates the signal due to overshrinkage induced by using large $\beta$ value, see Figure \ref{fig:Gen_Data}. On conducting repeated simulations in the manner mentioned previously, the 95\% credible intervals for this ELBO-selected model contain the true values around 91-92\% of the time. The model selected by maximizing the ELBO provides credible intervals with good accuracy. If no prior knowledge on hyperparemeters $\alpha$ and $\beta$ is available, we propose calibrating hyperparameters based on the ELBO as a way to suitably balance between shrinkage and estimation of parameters.

\begin{figure}
\centering
		 \includegraphics[height = 4.5cm, width = 0.3 \linewidth]{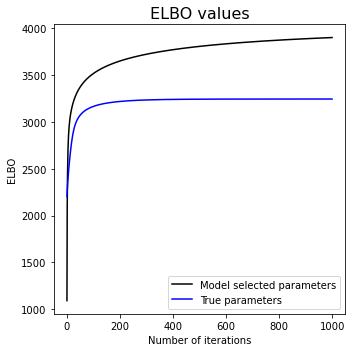}
		 \includegraphics[height = 4.5cm, width = 0.3 \linewidth]{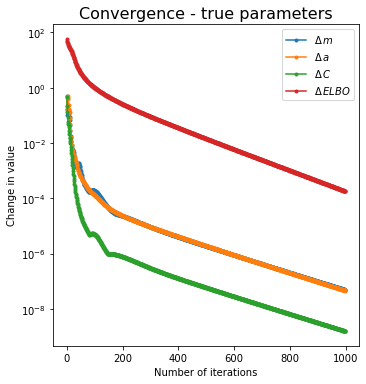}
		 \includegraphics[height = 4.5cm, width = 0.3 \linewidth]{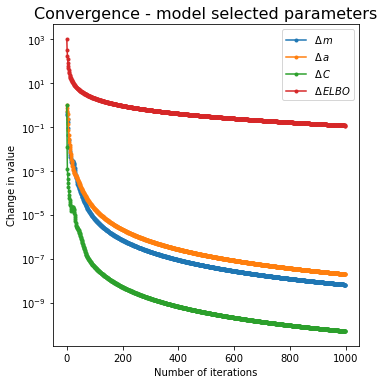}
		\caption{\label{fig:Gen_Convergence}  Convergence of the ELBO and the variational parameters along VIAS iterates, with truth and data generated from the hierarchical model (Subsection \ref{ssec:Hiearchical}).}
\end{figure}

\subsection{Fixed Sparse Truth}\label{ssec:fixedsparsetruth}
In this second example, we consider a fixed truth which is less sparse than the one in the previous example.
Moreover, the truth is chosen rather than sampled generatively. The model hyperparameters $\alpha$ and $\beta$ are chosen according to the model selection procedure to maximize the ELBO, which gave $\alpha = 0.0001$ and $\beta = 33.59$. 
\subsubsection{Setting}
We generate a random matrix $A \in \mathbb{R}^{50 \times 100}$ with entries taking values between 0 and 1. The to-be-reconstructed parameter $u \in \R^{100}$ is chosen so that only $10$ components are non-zero, see Figure \ref{fig:Example2}. The data $y$ is generated by multiplying $A$ with $u$ and adding a randomly sampled Gaussian with  standard deviation taken as 2\% of the max-norm of $Au$. As in the previous example, we set both $\theta^{0}$, $m^0 \in \mathbb{R}^{100}$ to be the all-ones vector, and $C^0\in \mathbb{R}^{100 \times 100}$ to be the identity matrix.

\subsubsection{Numerical Results}
Figure \ref{fig:Example2} shows the VIAS results and a comparison to other techniques. The VIAS predictions are close to the true values and even when the prediction was not accurate, credible intervals successfully captured true values. Compared to IAS, VIAS point estimates are much closer to the true values. The IAS reconstruction is less sparse than the VIAS reconstruction and typically underestimates the non-zero components of the signal. We have also obtained LASSO estimates with tuning parameter calibration based on cross validation (CV), Akaike information criterion, and Bayesian information criterion using Python's \texttt{LassoLarsIC} and \texttt{LassoLarsCV} functions. We only report in Figure \ref{fig:Example2} the result based on CV, which was the most accurate. From Figure \ref{fig:Example2}, we can observe that the VIAS estimate was superior to IAS and LASSO in terms of estimating zero components of the parameter, while also maintaining a good accuracy in non-zero components.

\begin{figure} 
    \centering
    \includegraphics[height = 4.5cm, width = 7.5cm]{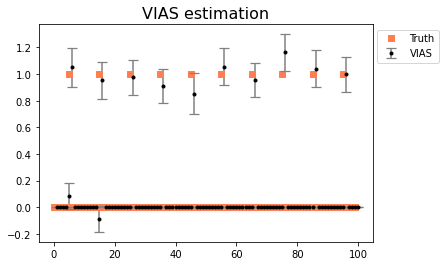}
    \includegraphics[height = 4.5cm, width = 7.5cm]{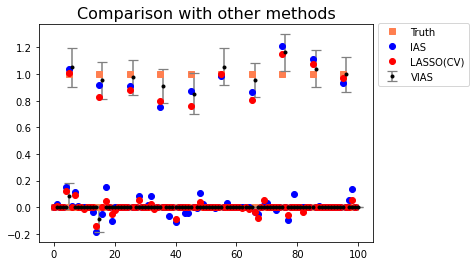}
    \caption{\label{fig:Example2}  Fixed sparse truth (Subsection \ref{ssec:fixedsparsetruth}). VIAS reconstruction and credible intervals (left). Comparison with IAS and LASSO with cross validation (right).}	
\end{figure}

\subsection{Deconvolution}\label{ssec:deconvolution}

\begin{figure}
	    \centering
		\includegraphics[height = 4.5cm, width = 18cm]{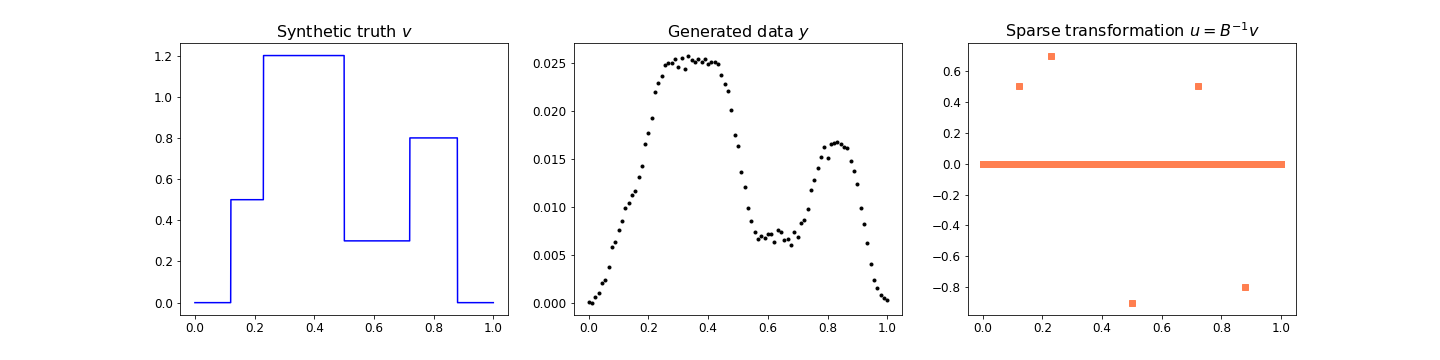}	\includegraphics[height = 4.5cm, width = 18cm]{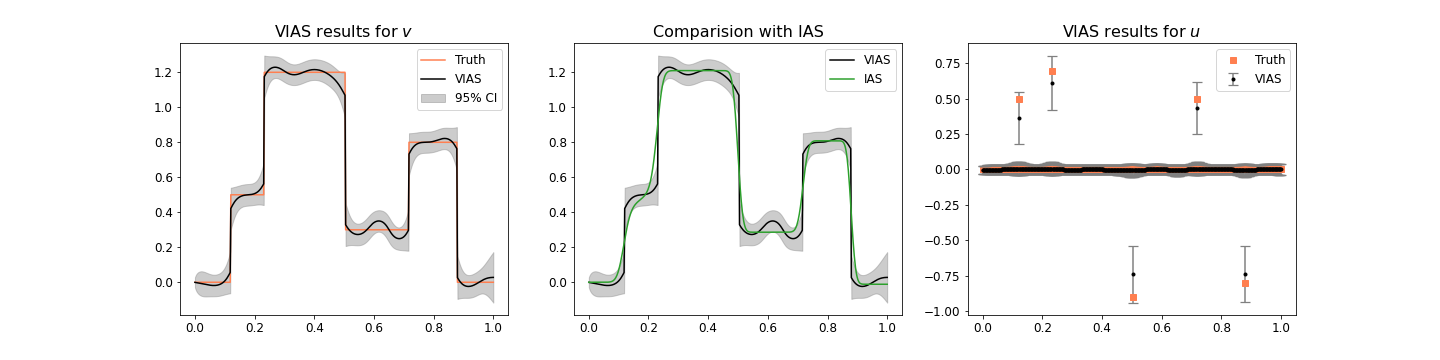}
		\includegraphics[height = 4.5cm, width = 18cm]{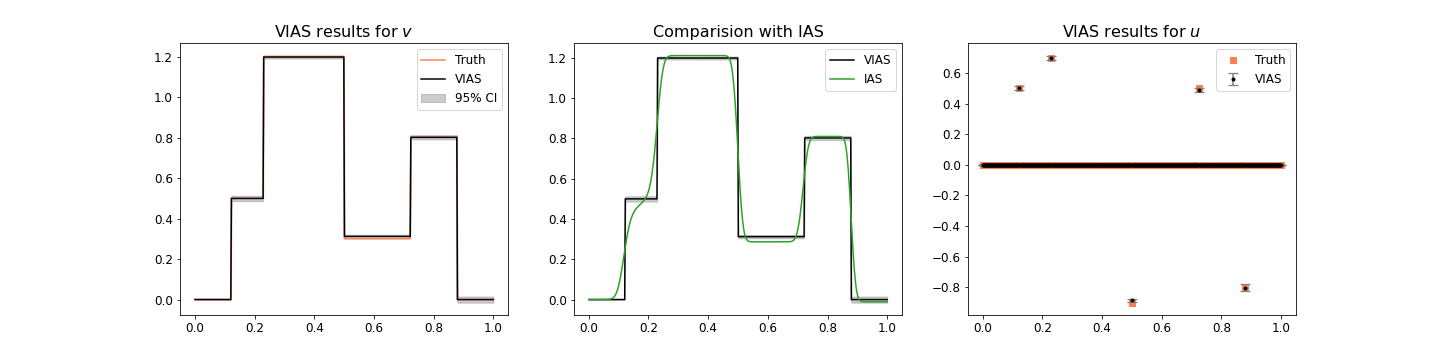}
		\caption{\label{fig:Ex3_Data} Deconvolution problem (Subsection \ref{ssec:deconvolution}). First row: truth, data, and sparse representation. Second row: VIAS reconstruction of the signal and its sparse representation with user-chosen model hyperparameters $\alpha = 0.12$ and $\beta = 50$. Third row: same as second row, but with ELBO-selected model.}
\end{figure}

In this example, we consider the 1D deconvolution problem in \cite{calvetti2020sparse}, where the goal is to reconstruct a piecewise constant signal convolved with an Airy kernel. We compare the results obtained with IAS and VIAS.  We demonstrate the high accuracy achieved by VIAS with ELBO-selected model hyperparameters and show that the VIAS signal covariance provides meaningful uncertainty quantification on the reconstruction. 

\subsubsection{Setting}

Let $f:[0,1]\to \R$ be a piecewise constant function with  $f(0)=0$. The data $y$ is generated by the following convolution:
	\[
	y_j = \int_0^1 A(s_j - t) f(t) dt + \eta_j,\quad 1\leq j\leq n,  \quad A(t) = \left(\frac{J_1(\kappa|t|)}{\kappa|t|}\right)^2, 
	\]
	where $J_1$ is the Bessel function of the first kind, $\kappa$ is a scalar controlling the width of the kernel that we set to $\kappa = 40$, and $s_j = (4+j)/100$. The above integral can be discretized, leading to the linear equation
\begin{equation}
\label{eq:Model_Example3}
	 y = Av + \eta \, ,  \quad A_{jk} =  w_k A(s_j - t_k) \,   ,\quad \eta \sim \Nc(0, \gamma^2 I_{n}) \, , 
\end{equation}
where $v \in \R^d$ has components $v_k = f(t_k)$ with $t_k = (k-1)/(n-1)$, and the $w_k$ are quadrature weights for discretization of the integral. The standard deviation $\gamma$ is set to be $1 \%$ of the max-norm of the noiseless signal. 
	
The unknown parameter $v$ is not a sparse vector, but can be written in sparse form in a suitable basis. To that end, define $u_j = v_j - v_{j-1}$ with $u_0 = 0$. Since $v$ is piecewise constant, $u$ is sparse. Note that we can write $u = B^{-1}v,$ where 
$$
	B^{-1}=\left[\begin{array}{cccc}
	1&0&\ldots&0\\
	-1&1&\ldots&0\\
	&&\ddots&\\
	0&\ldots&-1&1\\
	\end{array}\right]
	\in \R^{d \times d }.
$$
Thus, we can rewrite (\ref{eq:Model_Example3}) in terms of this sparse unknown vector $u$ as follows:
	\begin{equation}
	y = ABu + \eta, \quad \eta \sim \Nc(0, \gamma^2 I_{n}). 
	\end{equation}
	
 Our inverse problem is to estimate the vector $u$, assumed to be sparse, from the data vector $y$. Figure \ref{fig:Ex3_Data} shows the piecewise constant function $v$ to be reconstructed, its sparse transformation $u$, and the data $y$. We take $d = 500$ and $n = 91$ so that the problem is underdetermined. 
	
\subsubsection{Reconstruction Accuracy and Model Selection}


The results of applying VIAS to this problem are displayed in Figure \ref{fig:Ex3_Data}. As in the previous example, we set $m^0 \in \mathbb{R}^{500}$ to be the all-ones vector, and $C^0\in \mathbb{R}^{500 \times 500}$ to be the identity matrix. We consider two implementations of VIAS. In the first (Figure \ref{fig:Ex3_Data}, middle row), we use hyperparameter values $\alpha = 0.12$ and $\beta = 50$. In the second (Figure \ref{fig:Ex3_Data}, bottom row), we adopt the ELBO approach for model selection, which gives hyperparameter values $\alpha = 0.0001$ and $\beta \approx 7742$. Both VIAS implementations produce sparse solutions and the true signal lies within the obtained credible intervals.
The signal reconstructed with VIAS has sharp jumps whereas the IAS reconstruction has much smoother jumps. For instance, the first two jumps are treated as a smooth transition by IAS, while VIAS successfully detects them as separate jumps. While VIAS with hyperparameters $\alpha = 0.12$ and $\beta = 50$ detects the presence of five distinct jumps, uncertainty remains in the location of the jumps. Moreover, the reconstruction exhibits oscillatory artifacts in the constant regions, where IAS remains accurate. The VIAS reconstruction with model selection is significantly more accurate; it detects the five jump locations and it does not show oscillatory artifacts.

\subsubsection{Uncertainty Quantification and Covariance Structure}

The credible intervals obtained with both VIAS implementations provide additional insight into the nature of uncertainty quantification in this problem. VIAS detects that the main source of uncertainty is the location of the jumps. We observe in Figure \ref{fig:Ex3_Data} that the uncertainty in the reconstruction spikes near the jumps, while it remains relatively small around the constant regions. Moreover, the enhanced accuracy of VIAS with model selection is accompanied by narrower confidence intervals. Notice that  we have imposed the condition $v_0 = 0$, hence we are certain that the value at $0$ is $0$, and uncertainty is expected to increase from left to right. This overall trend is also successfully identified by both VIAS implementations. 

\vspace{-0.6cm}
		\begin{minipage}{0.45\textwidth}
		\begin{figure}[H]
		\captionsetup{width=.8\linewidth}
\hspace{-0.5cm}	\includegraphics[height = 4.5cm, width = 8cm]{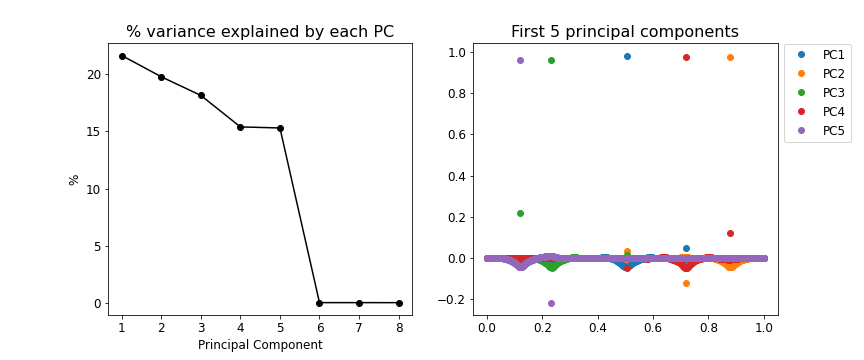}	
		\caption{\label{fig8} Principal component analysis of VIAS uncertainty with hyperparameters $\alpha = 0.12, \beta = 50.$}
		\end{figure}
		\end{minipage}
		\begin{minipage}{0.45\textwidth}
    	\begin{figure}[H]
		\captionsetup{width=.8\linewidth}
\hspace{-0.7cm}	\includegraphics[height = 4.5cm, width = 8cm]{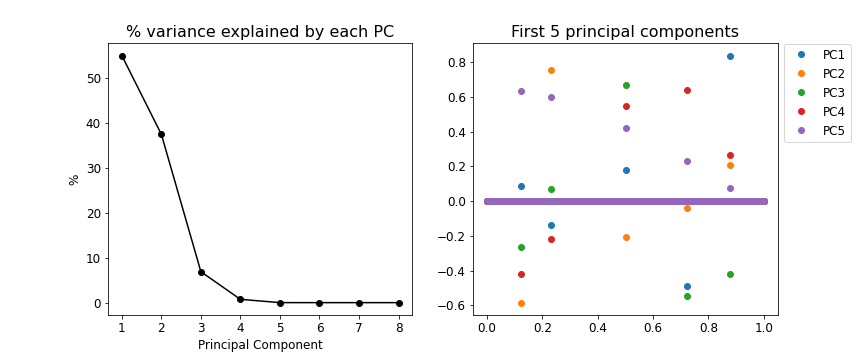}
		\caption{\label{fig9} Principal component analysis of VIAS uncertainty with ELBO-selected hyperparameters.}
	\end{figure}
		\end{minipage}
\\

An additional benefit of VIAS is that it not only gives an approximation to the component-wise variances on the signal reconstruction, but also an approximate covariance matrix. This matrix contains information on the dependencies in the reconstruction of various components of the signal. To illustrate this point, Figures \ref{fig8} and \ref{fig9} show a principal component analysis of the covariance matrix $C$ obtained with the two VIAS implementations considered above. For VIAS with user-chosen hyperparameters $\alpha = 0.12$ and $\beta = 50,$  the first five principal components each explain 15-20\% of the variance, and this drops to less than 0.1\% after the 5th component. 
In contrast, the three principal components already explain most of the variance for VIAS with model selection. In both VIAS implementations, each principal component is localized around a jump in the signal. With model selection, the principal components are fully localized at the jumps, reflecting that no uncertainty remains in the location of the jumps. On the other hand, with the first VIAS implementation the localization of principal components at the jumps is not perfect, reflecting that there is non-negligible uncertainty in both the location and magnitude of the jumps. 
Each principal component also contains components of its nearby jumps, which suggests that nearby jumps may be correlated. For instance, this is noticeable in the first VIAS implementation PC3 and PC5 ---corresponding to the first and second jumps--- likely because the first two jumps are very close to each other and it is harder to distinguish the two. Overall, the principal components obtained with VIAS successfully identify that most of the variance in the signal reconstruction lies around the jumps, that nearby jumps are correlated, and that uncertainty in the problem propagates from left to right. Finally, this example demonstrates the effect of the hyperparameters in the reconstruction. Poor choice of the hyperparameters gives a less accurate, more uncertain reconstruction.

\subsection{Learning Dynamics of Lorenz-63 System}\label{ssec:LearningDynamics}
In this section, we illustrate the use of IAS and VIAS for sparse identification of dynamical systems. Our problem setting is motivated by \cite{brunton2016discovering}.
\subsubsection{Setting}
Consider the Lorenz-63 system  \cite{lorenz1963deterministic}
\begin{align}\label{eq:lorenz63}
\begin{split}
    \frac{dx}{dt} &= \sigma(y-x), \\
    \frac{dy}{dt} &= x(\rho-z) - y, \\
    \frac{dz}{dt} &= xy-\zeta z, 
\end{split}    
\end{align}
with the classical parameter values  $\sigma = 10, \rho = 28$, $\zeta =8/3$. Our goal is to recover the right-hand side of \eqref{eq:lorenz63} from time series data.  We assume to have observations of a trajectory and its derivative along $2000$ equidistant time points in the time-interval $[0, 40];$  thus the time between observations is $\Delta t = 0.02.$ Note that in this example $y$ denotes the second component of the dynamics rather than the observed data.

Following \cite{brunton2016discovering}, we adopt a dictionary-learning strategy and construct, from the given trajectory data, a matrix of the form: 
\[
A = 
\left[
  \begin{array}{cccccccccccc}
    \vert & \vert & \vert & \vert & \vert & \vert  & \vert & \vert  & \vert & & \vert \\
    \textbf{x} & \textbf{y} & \textbf{z} & \textbf{x} ^2 & \textbf{y}^2 & \textbf{z}^2 & \textbf{x} \textbf{y} & \textbf{x} \textbf{z} & \textbf{y} \textbf{z} & \ldots & \textbf{x} ^5 & \ldots    \\
   \vert & \vert & \vert & \vert & \vert & \vert  & \vert & \vert  & \vert & & \vert 
  \end{array}
\right] \in \mathbb{R}^{2000 \times 55}.
\]
We then obtain, as in \cite{brunton2016discovering}, synthetic data on the derivatives by setting 
\begin{align*}
    \dot{\textbf{x}} &= A\Phi_1 + \eta_1, \quad  \Phi_1 = [-10, 10, 0, 0, 0, 0, 0, \ldots, 0]^\top,  \\
    \dot{\textbf{y}} &= A\Phi_2 + \eta_2,   \quad \Phi_2 = [28, -1, 0, 0, -1, 0, 0, \ldots, 0]^\top,\\
    \dot{\textbf{z}} &= A\Phi_3 + \eta_3, \quad \Phi_2 = [0, 0, -8/3, 0, 0, 0, 1, \ldots, 0]^\top,
\end{align*}
where $\eta_i \sim \mathcal{N}(0,0.3I_{2000})$ are independent. Our goal is to recover $\Phi_1, \Phi_2, \Phi_3$ based on $A$, $\dot{\textbf{x}}$, $\dot{\textbf{y}}$, $\dot{\textbf{z}}$. Note that $\Phi_1, \Phi_2, \Phi_3$ are sparse. More generally, the data-driven learning of dynamical systems in \cite{brunton2016discovering} relies on the underlying assumption that only a few terms of a given dictionary (in our example made of polynomials of degree five) govern the dynamics; sparse-promoting VIAS is hence a natural algorithm for identification of dynamical systems. 

\subsubsection{Numerical Results}
The recovery of the Lorenz-63 model via VIAS is shown in Figure \ref{Figure6}. Since we expect sparse structure in the parameter of interest, same as in our first example, we use $s = -0.495$ and $b = 0.1$ for the variational parameters. For the initializations, we set $\theta^0, m^0 \in \mathbb{R}^{55}$ to be the all-ones vector, and $C^0 \in \mathbb{R}^{55 \times 55}$ to be the identity matrix.
We observe that VIAS accurately recovers the true parameter values. As VIAS can quantify uncertainties of our estimates, we provide the true dynamics in the blue line in Figure \ref{Figure9} with the shaded regions determined by dynamics obtained from 2.5 and 97.5 percentile credible levels of the true parameters. We point out that despite the chaotic behavior of the Lorenz-63 system, the uncertainty in the dynamics remains moderate due to the high accuracy of the recovered coefficients. Moreover, we note that the relative larger error in the coefficients of the $y$-trajectory in Figure \ref{Figure6} translates into wider credible intervals for the reconstructed trajectories of $y$ in Figure \ref{Figure9}. Therefore, VIAS correctly identifies that there is more uncertainty in the reconstruction of the $y$-component.

Compared to VIAS, IAS showed inferior performance in estimating parameters of the Lorenz-63 model as one can see in Figure \ref{Figure6}. Furthermore, we provide plots of true dynamics with shaded regions determined by dynamics recovered from 95 percent credible intervals obtained from a Laplace approximation to the posterior. As shown in Section \ref{ssec:Hiearchical}, credible intervals based on Laplace approximation tend to be larger than the ones obtained from VIAS. In the context of the Lorenz-63 model, the uncertainty in the dynamics is amplified by the mismatch between the estimated coefficients and the true coefficients. From Figure \ref{Figure9}, we can see that quantifying uncertainty of dynamics based on Laplace approximation gives little information as the constructed shaded regions are often too wide, which highlights the strength of VIAS in uncertainty quantification tasks.

%
%

\begin{figure}[H]
		\centering 
	\includegraphics[height = 4.3cm]{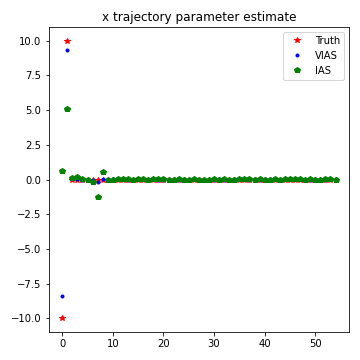}
	\includegraphics[height = 4.3cm]{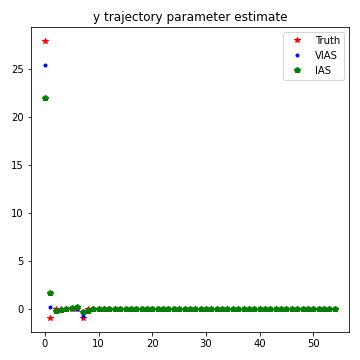}
	\includegraphics[height = 4.3cm]{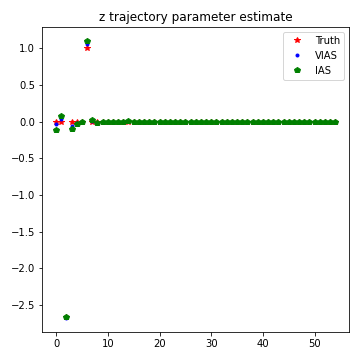}
	\includegraphics[height = 4.3cm]{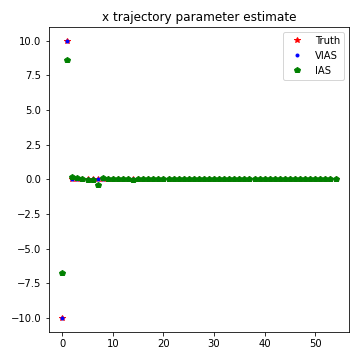}
	\includegraphics[height = 4.3cm]{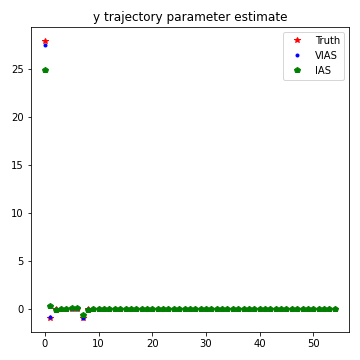}
	\includegraphics[height = 4.3cm]{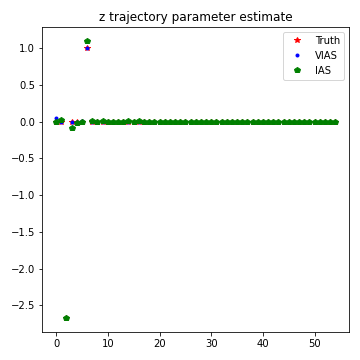}
	\caption{\label{Figure6} Recovery of dictionary coefficients for x-trajectory (first column), y-trajectory (second column), and z-trajectory (third column) using IAS and VIAS. Top: two iterations. Bottom: five iterations.}
\end{figure}

\begin{figure}[H]
		\centering 
	\includegraphics[height = 4.5cm]{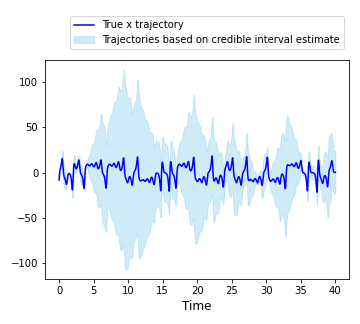}
	\includegraphics[height = 4.5cm]{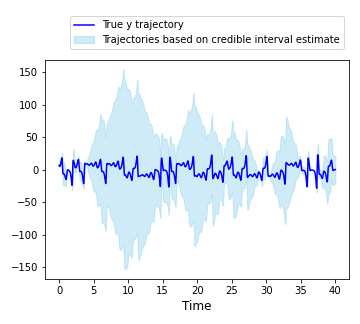}
	\includegraphics[height = 4.5cm]{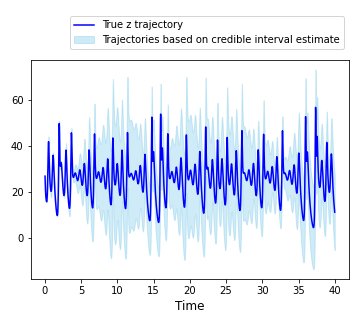}
	\includegraphics[height = 4.5cm]{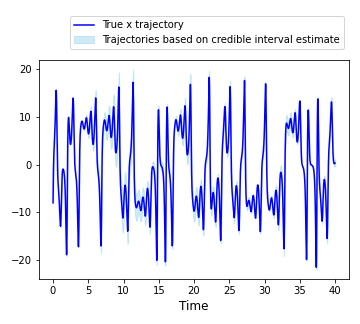}
	\includegraphics[height = 4.5cm]{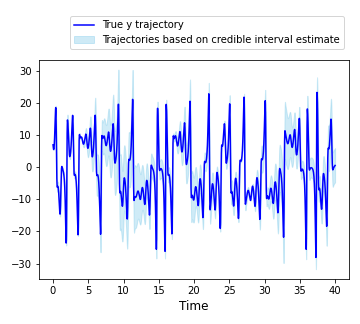}
	\includegraphics[height = 4.5cm]{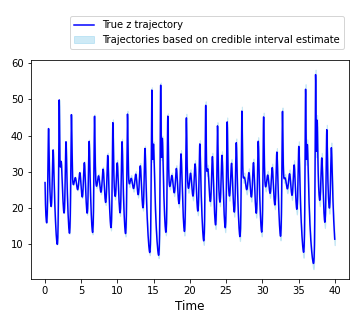}
	\caption{\label{Figure9} True Lorenz-63 trajectory and VIAS estimation. Top: two VIAS iterations. Bottom: five VIAS iterations. Blue line is the true dynamics. Shaded regions are constructed from 2.5 and 97.5 credible levels of coefficients.}
\end{figure}

\begin{figure}
		\centering 
	\includegraphics[height = 4.5cm]{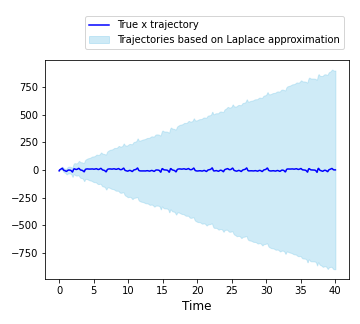}
	\includegraphics[height = 4.5cm]{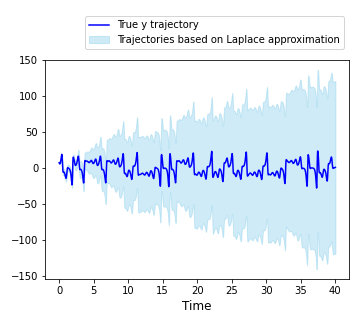}
	\includegraphics[height = 4.5cm]{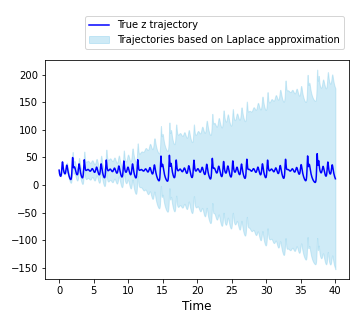}
	\includegraphics[height = 4.5cm]{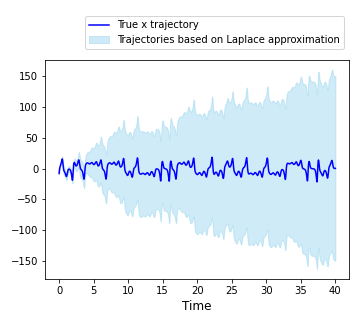}
	\includegraphics[height = 4.5cm]{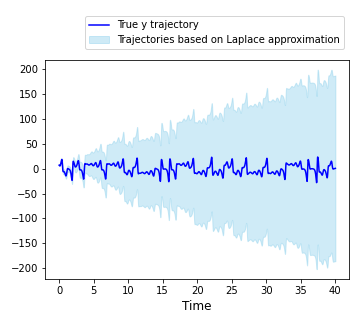}
	\includegraphics[height = 4.5cm]{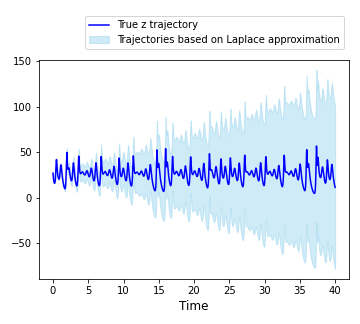}
	\caption{\label{Figure11} True Lorenz-63 trajectory and IAS estimation. Top: two IAS iterations. Bottom: five IAS iterations. Blue line is the true dynamics. Shaded regions are constructed from 2.5 and 97.5 credible levels of coefficients based on Laplace approximation.}
\end{figure}


\section{Conclusion and Future Directions}\label{sec:Conclusions}
This paper introduced VIAS, a variational inference computational framework for linear inverse problems with gamma hyperpriors. The proposed VIAS shares the flexibility and ease of implementation of IAS for MAP estimation. We have shown the accuracy of VIAS in several computed examples, and we have explored its potential to provide meaningful uncertainty quantification and perform model selection. There are several research directions that stem from this work:

\begin{itemize}
\item We have established a local convergence result for VIAS, but we have not provided an analysis of convergence rates. Moreover, it would be interesting to study the approximation error between the variational distribution and the true posterior.
\item Combining VIAS with iterative ensemble Kalman methods \cite{chada2020iterative} may allow to extend the current variational framework to nonlinear inverse problems, and to enhance the scalability to high dimensional linear and nonlinear inverse problems. In addition, we also envision that VIAS may provide a natural way to promote sparsity in iterative ensemble Kalman methods that are based on $L^2$ penalties. 
\item We have explored the potential of VIAS to perform approximate Bayesian inference and provide meaningful uncertainty quantification. In future work, our variational approach will be combined with Markov chain Monte Carlo \cite{de2013variational} and sequential Monte Carlo \cite{naesseth2018variational} for fully-Bayesian inference.
\item More general hyperpriors could be considered within our variational framework. In this direction, the work \cite{calvetti2020sparse} has investigated more flexible generalized gamma hyperpriors in the context of MAP estimation. 
\end{itemize}

\section*{Acknowledgments}
DSA is thankful for the support of NSF and NGA through the grant DMS-2027056 and to the BBVA Foundation for the Jos\'e Luis Rubio de Francia start-up grant. 
The work of HK was partially supported by the grant DMS-2027056.

\bibliographystyle{unsrt} 
\bibliography{references}

\begin{thebibliography}{10}

\bibitem{calvetti2020sparse}
Daniela Calvetti, Monica Pragliola, Erkki Somersalo, and Alexander Strang.
\newblock {Sparse reconstructions from few noisy data: analysis of hierarchical
  Bayesian models with generalized gamma hyperpriors}.
\newblock {\em Inverse Problems}, 36(2):025010, 2020.

\bibitem{calvetti2019hierachical}
Daniela Calvetti, Erkki Somersalo, and A~Strang.
\newblock {Hierachical Bayesian models and sparsity: $L^2$-magic}.
\newblock {\em Inverse problems}, 35(3):035003, 2019.

\bibitem{calvetti2020sparsity}
Daniela Calvetti, Monica Pragliola, and Erkki Somersalo.
\newblock {Sparsity promoting hybrid solvers for hierarchical Bayesian inverse
  problems}.
\newblock {\em SIAM Journal on Scientific Computing}, 42(6):A3761--A3784, 2020.

\bibitem{calvetti2019brain}
Daniela Calvetti, Annalisa Pascarella, Francesca Pitolli, Erkki Somersalo, and
  Barbara Vantaggi.
\newblock {Brain activity mapping from MEG data via a hierarchical Bayesian
  algorithm with automatic depth weighting}.
\newblock {\em Brain topography}, 32(3):363--393, 2019.

\bibitem{calvetti2015hierarchical}
Daniela Calvetti, Annalisa Pascarella, Francesca Pitolli, Erkki Somersalo, and
  Barbara Vantaggi.
\newblock {A hierarchical Krylov--Bayes iterative inverse solver for MEG with
  physiological preconditioning}.
\newblock {\em Inverse Problems}, 31(12):125005, 2015.

\bibitem{bishop}
C.~Bishop.
\newblock {\em {Pattern Recognition and Machine Learning}}.
\newblock Springer, 2006.

\bibitem{jordan1999introduction}
M.~I. Jordan, Z.~Ghahramani, T.~S. Jaakkola, and L.~K. Saul.
\newblock An introduction to variational methods for graphical models.
\newblock {\em Machine Learning}, 37(2):183--233, 1999.

\bibitem{wainwright2008graphical}
M.~J. Wainwright and M.~I. Jordan.
\newblock {\em {Graphical Models, Exponential Families, and Variational
  Inference}}.
\newblock Now Publishers Inc, 2008.

\bibitem{blei2017variational}
D.~M. Blei, A.~Kucukelbir, and J.~D. McAuliffe.
\newblock Variational inference: A review for statisticians.
\newblock {\em Journal of the American Statistical Association},
  112(518):859--877, 2017.

\bibitem{tarantola2005inverse}
A.~Tarantola.
\newblock {Inverse Problem Theory and Methods for Model Parameter Estimation}.
\newblock {\em {SIAM}}, 2015.

\bibitem{kaipio2006statistical}
J.~Kaipo and E.~Somersalo.
\newblock {Statistical and Computational Inverse Problems}.
\newblock {\em {Springer Science \& Business Media}}, 160, 2006.

\bibitem{calvetti2007introduction}
D.~Calvetti and E.~Somersalo.
\newblock {\em {An Introduction to Bayesian Scientific Computing: Ten Lectures
  on Subjective Computing}}, volume~2.
\newblock Springer Science \& Business Media, 2007.

\bibitem{AS10}
A.~M. Stuart.
\newblock Inverse problems: a {B}ayesian perspective.
\newblock {\em Acta Numerica}, 19:451--559, 2010.

\bibitem{sanzstuarttaeb}
Daniel Sanz-Alonso, Andrew~M Stuart, and Armeen Taeb.
\newblock Inverse problems and data assimilation.
\newblock {\em arXiv preprint arXiv:1810.06191}, 2018.

\bibitem{liu2008monte}
Jun~S Liu.
\newblock {\em {Monte Carlo Strategies in Scientific Computing}}.
\newblock Springer Science \& Business Media, 2008.

\bibitem{agapiou2017importance}
S.~Agapiou, O.~Papaspiliopoulos, D.~Sanz-Alonso, and A.~M. Stuart.
\newblock Importance sampling: Intrinsic dimension and computational cost.
\newblock {\em Statistical Science}, 32(3):405--431, 2017.

\bibitem{marzouk2016sampling}
Youssef Marzouk, Tarek Moselhy, Matthew Parno, and Alessio Spantini.
\newblock Sampling via measure transport: an introduction.
\newblock {\em Handbook of Uncertainty Quantification}, pages 1--41, 2016.

\bibitem{chada2020iterative}
N.~K. Chada, Y.~Chen, and D.~Sanz-Alonso.
\newblock {Iterative ensemble Kalman methods: A unified perspective with some
  new variants}.
\newblock {\em Foundations of Data Science}, 3(3):331--369, 2021.

\bibitem{maestrini2021variational}
Luca Maestrini, Robert~G Aykroyd, and Matt~P Wand.
\newblock A variational inference framework for inverse problems.
\newblock {\em arXiv preprint arXiv:2103.05909}, 2021.

\bibitem{tonolini2020variational}
Francesco Tonolini, Jack Radford, Alex Turpin, Daniele Faccio, and Roderick
  Murray-Smith.
\newblock Variational inference for computational imaging inverse problems.
\newblock {\em Journal of Machine Learning Research}, 21(179):1--46, 2020.

\bibitem{dehaene2017computing}
Guillaume~P Dehaene.
\newblock {Computing the quality of the Laplace approximation}.
\newblock {\em Neural Information Processing Systems}, 2017.

\bibitem{schillings2020convergence}
Claudia Schillings, Bj{\"o}rn Sprungk, and Philipp Wacker.
\newblock {On the convergence of the Laplace approximation and
  noise-level-robustness of Laplace-based Monte Carlo methods for Bayesian
  inverse problems}.
\newblock {\em Numerische Mathematik}, 145(4):915--971, 2020.

\bibitem{damlen1999gibbs}
Paul Damlen, John Wakefield, and Stephen Walker.
\newblock {Gibbs sampling for Bayesian non-conjugate and hierarchical models by
  using auxiliary variables}.
\newblock {\em Journal of the Royal Statistical Society: Series B (Statistical
  Methodology)}, 61(2):331--344, 1999.

\bibitem{agapiou2014analysis}
S.~Agapiou, J.~M. Bardsley, O.~Papaspiliopoulos, and A.~M. Stuart.
\newblock {Analysis of the Gibbs sampler for hierarchical inverse problems}.
\newblock {\em SIAM/ASA Journal on Uncertainty Quantification}, 2(1):511--544,
  2014.

\bibitem{roberts1997updating}
G.~O. Roberts and S.~K. Sahu.
\newblock {Updating schemes, correlation structure, blocking and
  parameterization for the Gibbs sampler}.
\newblock {\em Journal of the Royal Statistical Society: Series B (Statistical
  Methodology)}, 59(2):291--317, 1997.

\bibitem{green1984iteratively}
Peter~J Green.
\newblock Iteratively reweighted least squares for maximum likelihood
  estimation, and some robust and resistant alternatives.
\newblock {\em Journal of the Royal Statistical Society: Series B
  (Methodological)}, 46(2):149--170, 1984.

\bibitem{gorodnitsky1997sparse}
Irina~F Gorodnitsky and Bhaskar~D Rao.
\newblock {Sparse signal reconstruction from limited data using FOCUSS: A
  re-weighted minimum norm algorithm}.
\newblock {\em IEEE Transactions on signal processing}, 45(3):600--616, 1997.

\bibitem{daubechies2010iteratively}
Ingrid Daubechies, Ronald DeVore, Massimo Fornasier, and C~Sinan
  G{\"u}nt{\"u}rk.
\newblock Iteratively reweighted least squares minimization for sparse
  recovery.
\newblock {\em Communications on Pure and Applied Mathematics: A Journal Issued
  by the Courant Institute of Mathematical Sciences}, 63(1):1--38, 2010.

\bibitem{tibshirani1996regression}
Robert Tibshirani.
\newblock Regression shrinkage and selection via the lasso.
\newblock {\em Journal of the Royal Statistical Society: Series B
  (Methodological)}, 58(1):267--288, 1996.

\bibitem{carvalho2009handling}
Carlos~M Carvalho, Nicholas~G Polson, and James~G Scott.
\newblock Handling sparsity via the horseshoe.
\newblock In {\em Artificial Intelligence and Statistics}, pages 73--80. PMLR,
  2009.

\bibitem{robbins1992empirical}
Herbert~E Robbins.
\newblock {An empirical Bayes approach to statistics}.
\newblock In {\em Breakthroughs in Statistics}, pages 388--394. Springer, 1992.

\bibitem{bard2013practical}
Jonathan~F Bard.
\newblock {\em {Practical Bilevel Optimization: Algorithms and Applications}},
  volume~30.
\newblock Springer Science \& Business Media, 2013.

\bibitem{arridge2019solving}
Simon Arridge, Peter Maass, Ozan {\"O}ktem, and Carola-Bibiane Sch{\"o}nlieb.
\newblock Solving inverse problems using data-driven models.
\newblock {\em Acta Numerica}, 28:1--174, 2019.

\bibitem{brunton2016discovering}
Steven~L Brunton, Joshua~L Proctor, and J~Nathan Kutz.
\newblock Discovering governing equations from data by sparse identification of
  nonlinear dynamical systems.
\newblock {\em Proceedings of the National Academy of Sciences},
  113(15):3932--3937, 2016.

\bibitem{calvetti2018bayes}
Daniela Calvetti, Francesca Pitolli, Erkki Somersalo, and Barbara Vantaggi.
\newblock {Bayes meets Krylov: Statistically inspired preconditioners for
  CGLS}.
\newblock {\em SIAM Review}, 60(2):429--461, 2018.

\bibitem{bogachev1998gaussian}
Vladimir~Igorevich Bogachev.
\newblock {\em Gaussian Measures}.
\newblock Number~62. American Mathematical Soc., 1998.

\bibitem{lemonte2011exponentiated}
Artur~J Lemonte and Gauss~M Cordeiro.
\newblock {The exponentiated generalized inverse Gaussian distribution}.
\newblock {\em Statistics \& Probability Letters}, 81(4):506--517, 2011.

\bibitem{bezdek1987local}
JC~Bezdek, RJ~Hathaway, RE~Howard, CA~Wilson, and MP~Windham.
\newblock Local convergence analysis of a grouped variable version of
  coordinate descent.
\newblock {\em Journal of Optimization Theory and Applications},
  54(3):471--477, 1987.

\bibitem{lorenz1963deterministic}
E.~N. Lorenz.
\newblock Deterministic nonperiodic flow.
\newblock {\em Journal of Atmospheric Sciences}, 20(2):130--141, 1963.

\bibitem{de2013variational}
Nando de~Freitas, Pedro H\o{}jen-S\o{}rensen, Michael~I Jordan, and Stuart
  Russell.
\newblock {Variational MCMC}.
\newblock In {\em Proceedings of the Seventeenth Conference on Uncertainty in
  Artificial Intelligence}, UAI'01, page 120–127, San Francisco, CA, USA,
  2001. Morgan Kaufmann Publishers Inc.

\bibitem{naesseth2018variational}
C.~Naesseth, S.~Linderman, R.~Ranganath, and D.~Blei.
\newblock {Variational sequential Monte Carlo}.
\newblock In {\em International conference on artificial intelligence and
  statistics}, pages 968--977. PMLR, 2018.

\end{thebibliography}
\end{document}